\newtheorem{theorem}{Theorem}
\newtheorem{corollary}[theorem]{Corollary}
\newtheorem{lemma}[theorem]{Lemma}
\newtheorem{question}[theorem]{Question}
\newtheorem{proposition}[theorem]{Proposition}
\theoremstyle{definition}
\newtheorem{remark}[theorem]{Remark}
\newtheorem{example}[theorem]{Example}
\DeclareMathOperator{\im}{im}
\DeclareMathOperator{\Spec}{Spec}
\title{Irreversibility of Structure Tensors of Modules}
\author{Maciej Wojtala}
\date{\today{}}
\begin{document}

\maketitle

\begin{abstract}
    Determining the matrix multiplication exponent $\omega$ is one of the greatest open problems in theoretical computer science. We show that it is impossible to prove $\omega = 2$ by starting with structure tensors of modules of fixed degree and using arbitrary restrictions. It implies that the same is impossible by starting with $1_A$-generic non-diagonal tensors of fixed size with minimal border rank. This generalizes the work of Bl\"aser and Lysikov \cite{blser_et_al:LIPIcs:2020:12686}. Our methods come from both commutative algebra and complexity theory.
\end{abstract}

\thanks{Keywords: matrix multiplication complexity, minimal border rank tensors, structure tensors for modules.}

\section{Introduction}
\label{section-introduction}

Determining the matrix multiplication exponent $\omega$ is one of the most important problems in theoretical computer science. Trivial bounds are $2 \leq \omega \leq 3$. In the classical paper V. Strassen proved a non-trivial bound $\omega \leq \log_2 7 < 2.81$ \cite{strassen1969gaussian}. D. Coppersmith and S. Winograd proved the bound $\omega < 2.376$ \cite{CW}. This result was recently slightly improved \cite{davie_stothers_2013,Williams2012MultiplyingMF,LeGall} resulting with the best known upper bound $\omega < 2.373$ with rounding to the third decimal place.

No non-trivial lower bound is known and the conjecture states that $\omega = 2$. However since Coppersmith-Winograd there was very little progress in inventing more efficient algorithms and obtaining better upper bounds. Recent papers give an explanation for the phenomenon - many currently used approaches cannot result with an algorithm giving $\omega = 2$. For instance the laser method used for big Coppersmith-Winograd tensors cannot show $\omega = 2$, in fact it cannot even show $\omega \leq 2.30$ \cite{alman_et_al:LIPIcs:2018:8360,Alman_Williams}. Also the framework proposed by Umans and Cohn using reducing matrix multiplication to group algebra multiplication cannot show $\omega = 2$ for abelian groups and certain non-abelian groups \cite{Blasiak_2017,blasiak2017groups}. M. Christiandl, P. Vrana and J. Zuiddam introduced a quantity called irreversibility and proved that it is impossible to show $\omega = 2$ using arbitrary restrictions starting with irreversible tensors (i.e. with irreversibility greater than one) \cite{Christandl_Vrana_Zuiddam2}.

M. Bl\"aser and V. Lysikov showed in their paper \cite{blser_et_al:LIPIcs:2020:12686}, that one cannot prove $\omega = 2$ over $\mathbb{C}$ using arbitrary restrictions and starting with powers of structure tensors of non-semisimple algebras with bounded dimension.
This result is quite general, since the class of tensors that are structural tensors of some algebra is quite large, larger than previously considered classes, and using arbitrary restrictions is not a restrictive assumption.
For instance, Coppersmith-Winograd tensors are in this class. Let us name the three coordinates of our tensors by $V_1, \: V_2, \: V_3$, i.e. our tensors belong to the space $V_1 \otimes V_2 \otimes V_3$ where $V_i \simeq \mathbb{C}^n$.
Bl\"aser and Lysikov use this result to conclude that it is impossible to show $\omega = 2$ using arbitrary restrictions and starting with tensors that are both $1_{V_1}$- and $1_{V_2}$-generic (so-called binding tensors) with minimal border rank and being non-diagonal.
This result is really interesting since small border rank is believed to be desirable in obtaining fast matrix multiplication.
An important note is that one could still try to prove $\omega = 2$ by taking sequence of dimensions going to infinity (since in the result there is an assumption that dimensions are bounded).

This paper is an extension of results obtained by Bl\"aser and Lysikov. We consider an arbitrary algebraically closed field $\mathbb{K}$ (of arbitrary characteristic) and tensors from $\mathbb{K}^n \otimes \mathbb{K}^n \otimes \mathbb{K}^n$. From the perspective of commutative algebra a natural generalisation is to consider modules instead of algebras. In this paper we show such a generalisation. To achieve that we define a structure tensor of a module.
The generalisation theorem is the main result of this paper:

\begin{theorem}
\label{thm-main 1}
    For bounded $n$ it is impossible to prove $\omega = 2$ over $\mathbb{K}$ using arbitrary restrictions and starting with powers of tensors of size $n$ that are isomorphic to some structure tensor of a non-semisimple module.
\end{theorem}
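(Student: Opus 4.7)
The plan is to adapt the framework of Bl\"aser and Lysikov \cite{blser_et_al:LIPIcs:2020:12686}, which in turn rests on the Christandl-Vrana-Zuiddam irreversibility obstruction \cite{Christandl_Vrana_Zuiddam2}. Recall that if a tensor $T$ has asymptotic rank strictly greater than its asymptotic subrank, then no sequence of arbitrary restrictions applied to tensor powers $T^{\otimes N}$ can yield matrix multiplication tensors witnessing $\omega = 2$. The theorem therefore reduces to showing that the structure tensor $T_M$ of any non-semisimple module $M$ is irreversible in this asymptotic sense.

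First I would fix the formal setup, defining $T_M \in \mathbb{K}^n \otimes \mathbb{K}^n \otimes \mathbb{K}^n$ as the tensor encoding the action map $A \otimes M \to M$ after identifying all three factor spaces with $\mathbb{K}^n$, and recalling the precise form of the CVZ criterion used in \cite{blser_et_al:LIPIcs:2020:12686}: the obstruction to $\omega = 2$ persists through arbitrary restrictions of powers of $T$ whenever the asymptotic subrank $\tilde{Q}(T)$ is strictly smaller than the asymptotic rank $\tilde{R}(T)$.

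Next I would establish two estimates for $T_M$ when $M$ is non-semisimple. A lower bound of the form $\tilde{R}(T_M) \geq \dim M$ should follow from the existence of a faithful cyclic submodule (or from the $1$-generic structure of $T_M$ on an appropriate slice), mimicking the proof in the algebra case. For the upper bound $\tilde{Q}(T_M) < \dim M$, the radical filtration of $M$ provides a non-trivial quotient whose action factors through a proper subspace, and a Strassen-type support functional tuned to this filtration should yield the strict inequality.

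The main obstacle I expect is the upper bound on $\tilde{Q}(T_M)$. In the algebra case one exploits both the Wedderburn decomposition of $A/\mathrm{rad}(A)$ and the multiplicative closure of $T_A$ under powering; for a module, the target lives in $M$ while one of the other factors still lies in $A$, so tensor powers $T_M^{\otimes N}$ are no longer structure tensors of the same kind. To circumvent this I would (i) replace $A$ by the quotient $A/\mathrm{Ann}(M)$ to ensure faithfulness, (ii) identify $T_M^{\otimes N}$ with the structure tensor of the $A^{\otimes N}$-module $M^{\otimes N}$ and inherit its radical filtration from the Jacobson radical of $A^{\otimes N}$, and (iii) choose the support functional so that the non-split extension witnessing non-semisimplicity forces a strict loss in subrank. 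Once the strict gap $\tilde{Q}(T_M) < \tilde{R}(T_M)$ is proved, Theorem \ref{thm-main 1} follows by direct invocation of the CVZ obstruction.
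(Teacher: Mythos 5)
Your high-level roadmap (BL framework, CVZ obstruction, radical filtration, Strassen-type support functional) matches the paper's approach, but your proposal misses the central technical difficulty and spends effort on a non-issue.

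The genuine gap is in your step (iii). You say ``the radical filtration of $M$ provides a non-trivial quotient whose action factors through a proper subspace, and a Strassen-type support functional tuned to this filtration should yield the strict inequality.'' But the structure tensor lives in $S_{\leq 1}^* \otimes M^* \otimes M$, and the filtration on the \emph{first} factor $S_{\leq 1}$ is the preimage of the radical filtration intersected with degree-$\leq 1$ polynomials. To make the support functional produce a strict inequality (the ``$0$-subtight-unstable'' rather than merely ``$0$-subtight'' condition), one must have $\pi^{-1}(\mathrm{rad}(A))\cap S_{\leq 1}\neq 0$, where $A = S/\mathrm{Ann}(M)$. It is entirely possible for $A$ to be non-semisimple while every element of $\mathrm{rad}(A)$ is only representable by polynomials of degree $\geq 2$, in which case the induced filtration on $S_{\leq 1}$ is trivial and the Strassen-functional argument collapses. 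Ruling this out is the paper's main new contribution: it shows via the Nullstellensatz and an Artinian-spectrum argument that if $\pi^{-1}(\mathrm{rad}(A))\cap S_{\leq 1}=0$, then $|\mathrm{supp}\,M| = \deg M$, which in turn forces $M$ to be semisimple (because $\mathrm{rad}(A)$ is an intersection of finitely many maximal ideals, i.e.\ points, and fewer than $n$ points always admit a non-zero vanishing affine function). Your proposal gives no mechanism for this, and without it the argument would not close.

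By contrast, the obstacle you anticipate in step (ii) is not real. The Bl\"aser--Lysikov machinery (the paper's Theorem about bounding $\widetilde{SR}$ of $s$-subtight-unstable tensors) already bounds the asymptotic slice rank purely from the support combinatorics of a single tensor; it does not require $T_M^{\otimes N}$ to be a structure tensor of a module over $A^{\otimes N}$, nor any compatibility with the Jacobson radical of $A^{\otimes N}$. Pursuing that identification would be unnecessary work, and indeed $S_{\leq 1}^{\otimes N}$ is not the degree-$\leq 1$ slice of $S^{\otimes N}$, so the identification you suggest does not hold in the literal sense.

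Two minor points: the paper works with the asymptotic \emph{slice} rank $\widetilde{SR}$ rather than the asymptotic subrank $\tilde{Q}$ (the latter is dominated by the former, so this is enough for the irreversibility bound); and the lower-bound step is handled by showing $N(\sigma_M)=n$, which follows simply from $1\in S_{\leq 1}$ giving an identity-matrix slice, closer to your ``$1$-generic slice'' remark than to a faithful-cyclic-submodule argument.
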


This allows us also to extend the corollary obtained by Bl\"aser and Lysikov:

\begin{corollary}
\label{corollary-main-1}
     For bounded $n$ it is impossible to prove $\omega = 2$ over $\mathbb{K}$ using arbitrary restrictions and starting with powers of tensors of size $n$ that are $1_{V_1}$-generic, have minimal border rank and have rank larger than $n$.
\end{corollary}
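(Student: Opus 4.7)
The plan is to derive the corollary directly from Theorem~\ref{thm-main 1} by identifying the class of $1_{V_1}$-generic, minimal border rank tensors of rank larger than $n$ with the class of structure tensors of non-semisimple modules. So the proof reduces to establishing two correspondences and then invoking the main theorem.

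First I would recall (or appeal to the structural classification presumably developed in the body of the paper before this corollary) the fact that every $1_{V_1}$-generic tensor $T \in V_1 \otimes V_2 \otimes V_3$ with minimal border rank is isomorphic to a structure tensor of some $\mathbb{K}$-module $M$ of dimension $n$ over some commutative $\mathbb{K}$-algebra $A$. The idea is that $1_{V_1}$-genericity lets one choose a point in $V_1^*$ whose contraction with $T$ is an invertible matrix $V_2 \to V_3^*$, which identifies $V_2 \simeq V_3^*$. The remaining contractions then give a linear family of commuting operators on $V_2$ (commutativity is forced by minimal border rank, via the standard deformation/limit argument identifying minimal border rank $1_A$-generic tensors with $1$-generic limits of diagonal tensors). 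This exhibits $V_2$ as a module over the commutative algebra generated by these operators.

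Second, I would argue that the condition ``rank larger than $n$'' translates precisely into non-semisimplicity of the associated module $M$. Indeed, the structure tensor of a semisimple module of dimension $n$ decomposes as a direct sum of rank-one pieces, giving total tensor rank exactly $n$ (it is isomorphic to the unit tensor $\langle n \rangle$ after a change of basis, i.e.\ diagonal). Conversely, if the module is non-semisimple, the structure tensor cannot be diagonalised and a short lower-bound argument (e.g.\ the substitution method, or using the fact that a tensor of format $n\times n\times n$ with rank $n$ is necessarily equivalent to a diagonal tensor) forces its rank to exceed $n$. So under the hypothesis of the corollary the associated module is non-semisimple.

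Finally, with both correspondences in hand, Theorem~\ref{thm-main 1} applies verbatim: powers of $T$ are powers of a structure tensor of a non-semisimple module of bounded dimension, hence no sequence of arbitrary restrictions from such powers can yield $\omega = 2$. The main obstacle I anticipate is cleanly establishing the first correspondence, in particular verifying that $1_{V_1}$-genericity together with minimal border rank really forces the commuting family structure on $V_2$ in positive characteristic and over an arbitrary algebraically closed field; the rank-vs-semisimplicity dichotomy in the second step is comparatively mild, as it is essentially the classical statement that the structure tensor of a commutative algebra (or module) is diagonalisable iff it is semisimple.
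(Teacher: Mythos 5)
Your proposal follows essentially the same route as the paper: use $1_{V_1}$-genericity to contract with a full-rank slice and identify $V_2 \simeq V_3^*$, obtain a linear space of commuting endomorphisms containing the identity (the paper cites \cite[Lemma 2.6]{LaMM} for the commutativity forced by minimal border rank, valid over any algebraically closed field), build a polynomial module structure from that family so the original tensor becomes a structure tensor, translate the rank hypothesis into non-semisimplicity via Lemma~\ref{lemma-simple-module} (semisimple $\Rightarrow$ direct sum of one-dimensional simples $\Rightarrow$ diagonal structure tensor $\Rightarrow$ rank $n$), and then apply Theorem~\ref{thm-main 1}. One small remark: you only need the implication ``semisimple $\Rightarrow$ rank $n$'' to force non-semisimplicity under the hypothesis rank $> n$; the converse you sketch is true but not required, and the worry you flag about positive characteristic is exactly the point where the paper leans on the cited commutativity lemma rather than a deformation argument.
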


It means that we only need to assume $1_{V_1}$-genericity - assuming $1_{V_2}$-genericity is not necessary. There are plenty of tensors that are $1_{V_1}$-generic but not $1_{V_2}$- or $1_{V_3}$-generic; our generalization applies to them as well.

The approach in the proof is based on the one in Bl\"aser and Lysikov paper \cite{blser_et_al:LIPIcs:2020:12686}, but there are several issues.
The main part focuses on showing that a structure tensor of a non-semisimple module is $0$-subtight-unstable. To achieve this we use the quotient ring $A$ obtained by dividing the polynomial ring by the annihilator of the considered module. We show that ring $A$ is Artinian. It allows to create a filtration of the module induced by powers of a nilradical of $A$. We also induce the filtration of the polynomial ring (by taking the preimage of the filtration of $A$) and of its subset consisting of forms of degree at most one (by restricting the filtration of the polynomial ring).
Then we will show that it suffices to prove that the last filtration is non-trivial. To prove that this filtration is indeed non-trivial we will analyze the spectrum of the quotient ring and by Hilbert's Nullstellensatz use it to analyze affine functions which correspond to forms of degree at most one.

It is unclear if the assumption of $1_{V_1}$-genericity can be replaced with conciseness; the following question remains open:

\begin{question}
Do there exist concise minimal border rank (with rank greater than border rank) tensors that are stable?
\end{question}

\subsection*{Acknowledgements}
{During preparation of this publication the author was part of the Szkoła Orłów programme. The publication was created under the supervision of Joachim Jelisiejew, whose help and support were invaluable. The author also wants to thank Markus Bl\"aser, Vladimir Lysikov, Joseph Landsberg and the anonymous referee for helpful comments.}

\section{Definitions}
\label{section-definitions}

Let $\mathbb{K}$ be an algebraically closed field and let $S = \mathbb{K}[x_1, \: x_2, \: \ldots, \: x_{n-1}]$.

We denote by $S_{\leq 1}$ the $\mathbb{K}$-linear subspace of polynomials of degree at most one.

Let us consider an $S$-module $M$. Observe that structure of multiplication in this module is uniquely determined by multiplication by elements of $S_{\leq 1}$. Furthermore, multiplication by $S_{\leq 1}$ is uniquely determined by the multiplication by a basis of $S_{\leq 1}$. We define the structure tensor of $M$ as $\sigma_M \in S_{\leq 1}^* \otimes M^* \otimes M$ or equivalently $\sigma_M \colon S_{\leq 1} \otimes M \rightarrow M$. For a choice of coordinates on the first, second and third factors ($\{A_i\}, \: \{B_j\}, \{C_k\}$ respectively), $\sigma_M$ simply encodes the result of multiplying $A_i$ by $B_j$ in $\{C_k\}$ basis.

Note that using the polynomial algebra is a quite general assumption.
Indeed, let $A$ be a commutative unital algebra with $\dim_{\mathbb{K}} A = n$, let $M$ be a module over $A$ of dimension $n$ and $t$ be the tensor corresponding to the bilinear map $A \otimes M \to M$.
Let $(1, \: g_1, \: g_2, \ldots, \: g_{n-1})$ span the algebra as a $\mathbb{K}$-linear space.
Consider the unique surjection $S \to A$ that sends $x_i$ to $g_i$.
Then $M$ becomes an $S$-module and $t$ identifies with the structure tensor of the $S$-module $M$.
In particular, after putting $M = A$ we deduce that the structure tensor of any commutative unital algebra with $\dim_{\mathbb{K}} A  = n$ is isomorphic to the structure tensor of the obtained $S$-module.
Note that using the unity is needed to assure $1_{V_1}$-genericity. In comparison to \cite{blser_et_al:LIPIcs:2020:12686} we assume that the algebra $A$ is commutative, however due to \cite[Lemma 2.6]{LaMM} it is satisfied when its structure tensor is of minimal border rank (so Corollary \ref{corollary-main-1} strengthens analogous result by Bl\"aser and Lysikov \cite[Corollary 25]{blser_et_al:LIPIcs:2020:12686}).

\begin{example}
\label{example-4_x_4_matrices}
We consider $\mathbb{K}^4$, where we treat elements as column vectors.
Let us also consider matrices:
\[A_1 = \begin{bmatrix}
0 & 0 & 1 & 0 \\
0 & 0 & 0 & 0 \\
0 & 0 & 0 & 0 \\
0 & 0 & 0 & 0 \\
\end{bmatrix},
\qquad 
A_2 = \begin{bmatrix}
0 & 0 & 0 & 1 \\
0 & 0 & 0 & 0 \\
0 & 0 & 0 & 0 \\
0 & 0 & 0 & 0 \\
\end{bmatrix},
\qquad
A_3 = \begin{bmatrix}
0 & 0 & 0 & 0 \\
0 & 0 & 1 & 0 \\
0 & 0 & 0 & 0 \\
0 & 0 & 0 & 0 \\
\end{bmatrix}.\]

The matrices clearly pairwise commute, so we can introduce a structure of a $\mathbb{K}[x_1, \: x_2, \: x_3]$-module on $\mathbb{K}^4$ where multiplication of a vector $v$ by $x_i$ is simply the left multiplication of $v$ by $A_i$ (so we identify $x_i$ with $A_i$ and take as a multiplication in the module the left multiplication of a vector by a matrix).
Let us take the standard basis $(e_1, \: e_2, \: e_3, \: e_4)$ of $\mathbb{K}^4$.
We also have the basis of $\mathbb{K}[x_1, \: x_2, \: x_3]_{\leq 1}$: $(A_0 \coloneqq Id, \: A_1, \: A_2, \: A_3)$ (since we identified $x_i$ with $A_i$ for $i = 1, \: 2, \: 3$).
To obtain the structure tensor we need to verify the results of pairwise multiplications.

We have

\[A_0 \cdot e_i = e_i,
\qquad 
A_1 \cdot e_i =
    \begin{bmatrix}
        \delta_{i = 3} \\
        0 \\
        0 \\
        0 \\
    \end{bmatrix},
\qquad
A_2 \cdot e_i =
    \begin{bmatrix}
        \delta_{i = 4} \\
        0 \\
        0 \\
        0 \\
    \end{bmatrix},
\qquad
A_3 \cdot e_i =
    \begin{bmatrix}
        0 \\
        \delta_{i = 3} \\
        0 \\
        0 \\
    \end{bmatrix}.\]
    
    So the structure tensor of this module is
    \[A_0^* \otimes e_1^* \otimes e_1 + A_0^* \otimes e_2^* \otimes e_2 + A_0^* \otimes e_3^* \otimes e_3 + A_0^* \otimes e_4^* \otimes e_4 + A_1^* \otimes e_3^* \otimes e_1 + A_2^* \otimes e_4^* \otimes e_1 +  A_3^* \otimes e_3^* \otimes e_2.\]
\end{example}

Above we explained that every structure tensor of a commutative unital algebra is the structure tensor of a module. Here we show that for an algebra $A = \sfrac{S}{I}$ the structure tensor of $A$ as an algebra and as an $S$-module may differ. The difference comes from the fact that a given algebra can have many $S$-module structures: the structure coming from a span $(1, \: g_1, \: g_2, \ldots, \: g_{n-1})$ is in general different from the structure coming from $\sfrac{S}{I}$.

\begin{example}
\label{example-algebra_module}
Let us consider the algebra $\sfrac{\mathbb{K}[x_1, \: x_2]}{(x_1^3, \: x_2)}$. We have standard basis of $\mathbb{K}[x_1, \: x_2]_{\leq 1}$: $(A_0 \coloneqq 1, \: A_1 \coloneqq x_1, \: A_2 \coloneqq x_2)$ and standard basis of the algebra: $(A_0, \: A_1, \: A_3 \coloneqq x_1^2)$. To obtain the structure tensors we need to verify results of pairwise multiplications, however for structure tensor of module we multiply elements from the basis of $\mathbb{K}[x_1, \: x_2]_{\leq 1}$ by the basis of the algebra and for the structure tensor of algebra - elements from the basis of the algebra by themselves.

So the structure tensor of the module and the algebra are respectively
\begin{align*}
&A_0^* \otimes A_0^* \otimes A_0 + A_0^* \otimes A_1^* \otimes A_1 + A_0^* \otimes A_3^* \otimes A_3 + A_1^* \otimes A_0^* \otimes A_1 + A_1^* \otimes A_1^* \otimes A_3,\\
&A_0^* \otimes A_0^* \otimes A_0 + A_0^* \otimes A_1^* \otimes A_1 + A_0^* \otimes A_3^* \otimes A_3 + A_1^* \otimes A_0^* \otimes A_1 + A_1^* \otimes A_1^* \otimes A_3 + A_3^* \otimes A_0^* \otimes A_3.
\end{align*}

Note that both structure tensors come from the $\mathbb{K}[x_1, \: x_2]$-module structure on $A$: the structure tensor of the module encodes the results of multiplication by $1, \: x_1, \: x_2$, while the structure tensor of the algebra encodes the results of multiplication by $1, \: x_1, \: x_1^2$.
\end{example}

A module is simple if it is non-zero and has no non-zero proper submodules. A module is semi-simple if it is a direct sum of simple modules.

An $\emph{arbitrary restriction}$ of a tensor $t \in V_1 \otimes V_2 \otimes V_3$ is $\phi(t)$, where $\phi \colon V_1 \otimes V_2 \otimes V_3 \to V_1^\prime \otimes V_2^\prime \otimes V_3^\prime$ is a linear map induced by a triple of linear maps $\phi_i \colon V_i \to V_i^\prime$.

For a tensor $t \in V_1 \otimes V_2 \otimes V_3$ and a linear form $x \in V_1^*$, the contraction $t \cdot x$ is defined
as $(v_1 \otimes v_2 \otimes v_3) \cdot x = x(v_1)(v_2 \otimes v_3)$ for rank one tensors and extended to arbitrary tensors
by linearity. Thus, a tensor $t \in V_1 \otimes V_2 \otimes V_3$ defines a map $V_1^* \rightarrow V_2 \otimes V_3$ sending $x$ to $t \cdot x$.
The two other maps $V_2^* \rightarrow V_1 \otimes V_3$ and $V_3^* \rightarrow V_1 \otimes V_2$ can be defined similarly. These maps are called flattenings of the tensor $t$. A tensor is called concise if all its flattenings are injective.
Such a tensor does not lie in any non-trivial subspace $V_1^\prime \otimes V_2^\prime \otimes V_3^\prime$ with  $V_k^\prime \subset V_k$. We denote
the maximum of the three ranks of the flattenings by $N(t)$. For a concise tensor, the ranks
of the flattenings are the dimensions of $V_k$, and $N(t) = \max \left \{\dim_{\mathbb{K}} V_1, \: \dim_{\mathbb{K}} V_2, \: \dim_{\mathbb{K}} V_3 \right \}$.
A tensor $t \in V_1 \otimes V_2 \otimes V_3$ is called $1_{V_1}$-generic if $\dim_{\mathbb{K}} V_2 = \dim_{\mathbb{K}} V_3$ and there exists $x \in V_1^*$ such that the matrix $t \cdot x \in V_2 \otimes V_3$ has full rank. The notions of $1_{V_2}$-genericity and $1_{V_3}$-genericity are defined analogously \cite{blser_et_al:LIPIcs:2020:12686}.

A block tensor is a tensor $t \in V_1 \otimes V_2 \otimes V_3$ with a triple of direct sum decompositions $V_1 = \bigoplus_{i \in I_1} V_{1, \: i}, \: V_2 = \bigoplus_{i \in I_2} V_{2, \: i}, \: V_3 = \bigoplus_{i \in I_3} V_{3, \: i}$.
The decompositions of $V_k$ induce the decomposition of the tensor space
$V_1 \otimes V_2 \otimes V_3 = \bigoplus_{(i_1, \: i_2, \: i_3) \in I_1 \times I_2 \times I_3} V_{1, \: i_1} \otimes V_{2, \: i_2} \otimes V_{3, \: i_3}$.
For a block tensor $t$, we denote by $t_{i_1 i_2 i_3}$ its projection onto $V_{1, \: i_1} \otimes V_{2, \: i_2} \otimes V_{3, \: i_3}$ \cite[Definition 8]{blser_et_al:LIPIcs:2020:12686}.
The support of a block tensor $t$ is defined as $supp \; t = \{(i_1, \: i_2, \: i_3) \in I_1 \times I_2 \times I_3 \mid t_{i_1 i_2 i_3} \neq 0\}$ \cite[Definition 9]{blser_et_al:LIPIcs:2020:12686}.

The block format of a block tensor is a triple $(n_1, \: n_2, \: n_3)$ of maps $n_k \colon I_k \rightarrow \mathbb{N}$ defined as $n_k(i) = \dim_{\mathbb{K}} V_{k, \: i}, \: k = 1, \: 2, \: 3$.  The relative block format is a triple $(f_1, \: f_2, \: f_3)$ defined as $f_k(i) = \frac{n_k(i)}{N_k}$ where $N_k = \dim_{\mathbb{K}} V_k$. \cite[Definition 10]{blser_et_al:LIPIcs:2020:12686}.
A subset $S \subset I_1 \times I_2 \times I_3$ is called s-subtight with numbering given by three maps $a_k \colon I_k \rightarrow \mathbb{Z}$ if for each $(i_1, \: i_2, \: i_3) \in S$ we have $a_1(i_1) + a_2(i_2) + a_3(i_3) \leq s$ \cite[Definition 13]{blser_et_al:LIPIcs:2020:12686}.

A tensor $t \in V_1 \otimes V_2 \otimes V_3$ is unstable, if $0$ is contained in the Zariski closure of $SL(V_1) \times SL(V_2) \times SL(V_3)$ orbit of $t$.

A set $S \in I_1 \times I_2 \times I_3$ is a combinatorially unstable support in block format $(n_1, \: n_2, \: n_3)$ if there exist exponents $u_k \colon I_k \rightarrow \mathbb{Q}$ such that $\sum_{i \in I_k} n_k(i) u_k(i) = 0$ for each $k$ and $u_1(i_1) + u_2(i_2) + u_3(i_3) > 0$ for each $(i_1, \: i_2, \: i_3) \in S$ \cite[Definition 19]{blser_et_al:LIPIcs:2020:12686}.
A tensor is combinatorially unstable if its support is a combinatorially unstable support.

It turns out that combinatorially unstable tensors are unstable \cite[Proposition 20]{blser_et_al:LIPIcs:2020:12686}.

Let a tensor $t$ have $s$-subtight support with numbering $(a_1, \: a_2, \: a_3)$. Let $(f_1, \: f_2, \: f_3)$ be a relative block format of $t$ and $\overline{a}_k = \sum_{i \in I_k} f_k(i) a_k(i)$ for $k = 1, \: 2, \: 3$. If the inequality $\overline{a}_1 + \overline{a}_2 + \overline{a}_3 > s$ holds, then we say then that the tensor $t$ is $s$-subtight-unstable.
It turns out that $s$-subtight-instability implies combinatorial instability \cite[Theorem 21]{blser_et_al:LIPIcs:2020:12686}.

By $SR(t)$ we denote slice rank of tensor $t$.
By $\widetilde{SR} (t)$ we denote the asymptotic slice rank of tensor $t$, i.e. $\widetilde{SR} (t) = \limsup_{m \in \mathbb{N}}{SR(t^{\otimes m})^{\frac{1}{m}}}$.

For a ring $R$ we denote by $rad(R)$ the nilradical of $R$, i. e. the ideal consisting of elements which raised to some power give zero.

\section{Irreversibility of Structure Tensors of Modules}
\label{section-main-part}

Let $\mathbb{K}$ be an algebraically closed field.
Let $S = \mathbb{K}[x_1, \: x_2, \: \ldots, \: x_{n-1}]$.
For $S$-module $M$ of rank $n = \dim_{\mathbb{K}} S_{\leq 1}$ let $\sigma_M \colon S_{\leq 1} \otimes M \rightarrow M$ be its structure tensor, so $\sigma_M \in S_{\leq 1}^* \otimes M^* \otimes M \simeq \mathbb{K}^{n} \otimes \mathbb{K}^{n} \otimes \mathbb{K}^{n}$.

Let $F_n$ be the set of tensors $t \in \mathbb{K}^{n} \otimes \mathbb{K}^{n} \otimes \mathbb{K}^{n}$ such that there exist a non-semisimple $S$-module $M$ such that $t \simeq \sigma_M$ (so $F_n$ is the set of tensors considered in Theorem \ref{thm-main 1}).

Let us distinguish the coordinates by taking $\mathbb{K}^{n} \otimes \mathbb{K}^{n} \otimes \mathbb{K}^{n} = A \otimes B \otimes C$.

Before proving this corollary we introduce a useful lemma which classifies simple modules. It is quite known but we present it for self-containment.

\begin{lemma}
\label{lemma-simple-module}
   An $S$-module $M$ is simple $\iff$ there exists a maximal ideal $\mathfrak{m}$ of $S$ such that $M = \sfrac{S}{\mathfrak{m}}$.
\end{lemma}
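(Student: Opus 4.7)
The plan is a standard two-direction argument using the module-submodule / ideal correspondence for cyclic modules. Both directions are essentially one-line consequences of basic ring theory, so I do not anticipate a genuine obstacle; the only thing to watch is being clear about the identification $M \simeq S/\mathfrak{m}$ as $S$-modules (rather than literal equality).

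For the direction ($\Leftarrow$), assume $M \simeq S/\mathfrak{m}$ for some maximal ideal $\mathfrak{m} \subset S$. First I would invoke the standard bijective correspondence between $S$-submodules of $S/\mathfrak{m}$ and ideals of $S$ containing $\mathfrak{m}$. Since $\mathfrak{m}$ is maximal, the only ideals of $S$ containing $\mathfrak{m}$ are $\mathfrak{m}$ and $S$, which correspond to the submodules $0$ and $S/\mathfrak{m}$ of $M$. As $M \neq 0$ (because $\mathfrak{m} \neq S$), this makes $M$ simple.

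For the direction ($\Rightarrow$), assume $M$ is simple. Since $M$ is nonzero, pick any $m \in M \setminus \{0\}$ and consider the cyclic submodule $Sm \subseteq M$. This submodule is nonzero (as $1 \cdot m = m \neq 0$), and simplicity of $M$ forces $Sm = M$. Define the $S$-module homomorphism $\phi \colon S \to M$ by $\phi(s) = s \cdot m$; then $\phi$ is surjective, so by the first isomorphism theorem $M \simeq S/I$ where $I = \ker \phi = \mathrm{Ann}_S(m)$. Applying the submodule-ideal correspondence in the other direction, simplicity of $S/I$ means there are exactly two ideals of $S$ containing $I$, namely $I$ itself and $S$, which is precisely the statement that $I$ is a maximal ideal of $S$. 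Taking $\mathfrak{m} = I$ yields the desired description.

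The only mildly delicate step is ensuring $I \neq S$; this is immediate because if $I = S$ then $m = 1 \cdot m = 0$, contradicting our choice of $m$. With that detail handled, both directions combine to give the stated equivalence.
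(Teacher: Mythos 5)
Your proof is correct and follows essentially the same strategy as the paper: for the backward direction, note that the $S$-submodules of $S/\mathfrak{m}$ correspond to ideals between $\mathfrak{m}$ and $S$; for the forward direction, take a cyclic generator, apply the first isomorphism theorem, and deduce maximality of the kernel. The only cosmetic difference is that you invoke the submodule--ideal correspondence directly to conclude $\ker\phi$ is maximal, while the paper argues by contradiction via a maximal ideal containing $\ker\pi_N$; these are the same idea dressed differently.
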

\begin{proof}
    "$\impliedby$": $\sfrac{S}{\mathfrak{m}}$ is a field which is clearly simple.
    
    "$\implies$": Let $N$ be a simple $S$-module. Let $n_0$ be a non-zero element of $N$. Then $Sn_0$ is a non-trivial submodule and thus $Sn_0 = N$. Let us define $\pi_N \colon S \rightarrow N$, $\pi_N(s) = sn_0$. By $Sn_0 = N$, we have that $\pi_N$ is surjective and thus $N \simeq \sfrac{S}{\ker \pi_N}$. It now suffices to show that $\ker \pi_N$ is a maximal ideal of $S$.
    
    Since $N \neq 0$, $\ker \pi_N$ is a subideal of some maximal ideal $\mathfrak{m}$ of $S$. Let us suppose that $\ker \pi_N \subsetneq \mathfrak{m}$.
    Then $\sfrac{\mathfrak{m}}{\ker \pi_N} \neq 0$ and clearly $\sfrac{\mathfrak{m}}{\ker \pi_N} \subseteq \sfrac{S}{\ker \pi_N} = N$. However $N$ is simple, so it must hold $\sfrac{\mathfrak{m}}{\ker \pi_N} = \sfrac{S}{\ker \pi_N}$.
    However it implies that $\sfrac{
    S}{\mathfrak{m}} = \sfrac{(\sfrac{S}{\ker \pi_N})}{\sfrac{(\mathfrak{m}}{\ker \pi_N})} = 0$, which is a contradiction.
\end{proof}

\begin{proof}[Proof of Corollary \ref{corollary-main-1}]
\label{proof-corollary-main-1}
    Let $T_n$ be the set of tensors $t \in \mathbb{K}^{n} \otimes \mathbb{K}^{n} \otimes \mathbb{K}^{n}$ such that $t$ is $1_A$-generic, the rank of $t$ is larger than $n$ and the border rank of $t$ is $n$.
    Let $t \in T_n$. We need to show that $t \in F_n$.
    
    By the assumption that $t$ is $1_A$-generic we know that there exist $\alpha$ such that $t(\alpha): C \rightarrow B^*$ has maximal rank.
    Thus $a \otimes b \otimes c \mapsto a \otimes b \otimes t(\alpha)(c)$ is a tensor isomorphism between $t$ and a tensor $\widetilde{t} \in A \otimes B \otimes B^*$.
    Let us denote $V = \widetilde{t}(A^*) \subseteq B \otimes B^* = 
    End(B)$.
    By \cite[Lemma 2.6]{LaMM} the linear subspace $V$ consists of commutative matrices. Since $\widetilde{t}(\alpha)$ is an identity matrix, we have $Id \in V$. We also clearly have $\dim_{\mathbb{K}} V \leq n$.
    
    We can now choose a linear span $\{1, \: v_1, \: v_2, \: \ldots, \: v_{n-1}\}$ of $V$ (here $1$ is $Id$).
    We define a $\mathbb{K}[x_1, \: x_2, \: \ldots, \: x_{n-1}]$-module structure on $V$ by
    $x_i \cdot b = v_i(b)$ for $i = 1, \: 2, \: \ldots, \: n-1$ and for all $b \in B$.
    
    By commutativity it is indeed a module structure, since $x_i \cdot (x_j \cdot b) = v_i(v_j(b)) = v_j(v_i(b)) = x_j \cdot (x_i \cdot b)$. One can read more about such a construction in \cite{jelisiejew2021components}.
    Let us denote the module as $M_{\widetilde{t}}$.
    By construction its structure tensor $\sigma_{M_{\widetilde{t}}}$ is isomorphic to $\widetilde{t}$.
    Since $\widetilde{t} \simeq t$, we obtain that $t \simeq \sigma_{M_{\widetilde{t}}}$, so $t$ is isomorphic to $S_n$-module structure tensor.
    
    We now need to argue that $M_{\widetilde{t}}$ is non-semisimple.
    However, if the module $M_{\widetilde{t}}$ were semisimple, then by Lemma \ref{lemma-simple-module} it would be a direct sum of modules of rank $1$, so its structure tensor would be diagonal in some basis and thus its rank would be $n$. Since $t$ is isomorphic to this structure tensor, it would imply that $rank(t) = n$, which is contradiction with the assumption $rank(t) > n$.
\end{proof}

Let us now fix a degree $n$ and a module $M$.

To prove \ref{thm-main 1}, we will use an approach that directly generalizes the approach from \cite{blser_et_al:LIPIcs:2020:12686} - we will show that if $M$ is not semisimple, then $\sigma_M$ is $0$-subtight-unstable.

\begin{theorem}
\label{thm-main 2}
    The tensor $\sigma_M$ is $0$-subtight-unstable or $M$ is semisimple.
\end{theorem}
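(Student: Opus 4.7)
The plan is to construct an explicit block decomposition of $\sigma_M$, equip it with a numbering that makes the support $0$-subtight, and then show that non-semisimplicity of $M$ forces the associated average invariant to be strictly positive. The central auxiliary object is the quotient ring $A = S/\mathrm{ann}(M)$.

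First, since $M$ is a faithful $A$-module of $\mathbb{K}$-dimension $n$, the embedding $A \hookrightarrow \mathrm{End}_{\mathbb{K}}(M)$ shows that $A$ is a finite-dimensional commutative $\mathbb{K}$-algebra, hence Artinian. Its nilradical $\mathfrak{n} = \mathrm{rad}(A)$ is therefore nilpotent, and the chain $M \supseteq \mathfrak{n} M \supseteq \mathfrak{n}^2 M \supseteq \cdots$ eventually reaches $0$. Pulling back along the quotient $\pi \colon S \twoheadrightarrow A$ and intersecting with $S_{\leq 1}$ gives a filtration $F^i S_{\leq 1} = \{\ell \in S_{\leq 1} : \pi(\ell) \in \mathfrak{n}^i\}$. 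After choosing $\mathbb{K}$-linear complements $S_{\leq 1} = \bigoplus_i L_i$ with $\bigoplus_{i' \geq i} L_{i'} = F^i S_{\leq 1}$ and $M = \bigoplus_k M_k$ with $\bigoplus_{k' \geq k} M_{k'} = \mathfrak{n}^k M$ (together with the matching decomposition of $M^*$), the tensor $\sigma_M$ acquires a block format indexed by these grades. Because multiplication respects the filtrations, $F^i S_{\leq 1} \cdot \mathfrak{n}^j M \subseteq \mathfrak{n}^{i+j} M$, so the support consists only of triples $(i, j, k)$ with $k \geq i + j$.

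I would then take the numbering $a_1(i) = i$, $a_2(j) = j$, $a_3(k) = -k$. The inequality $a_1(i) + a_2(j) + a_3(k) = i + j - k \leq 0$ on the support gives $0$-subtightness immediately. Writing $d_i = \dim L_i$ and $e_j = \dim M_j$, the contributions from $M^*$ and $M$ cancel: $\overline{a}_2 + \overline{a}_3 = \tfrac{1}{n}\bigl(\sum_j j e_j - \sum_k k e_k\bigr) = 0$. So the $0$-subtight-unstable condition $\overline{a}_1 + \overline{a}_2 + \overline{a}_3 > 0$ collapses to $\overline{a}_1 = \tfrac{1}{n} \sum_i i\, d_i > 0$, which is equivalent to the filtration of $S_{\leq 1}$ being nontrivial, i.e., $F^1 S_{\leq 1} \neq 0$.

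The main obstacle, where the hypothesis on $M$ is used, is proving $F^1 S_{\leq 1} \neq 0$ whenever $M$ is non-semisimple. Since $F^1 S_{\leq 1}$ equals the kernel of the composition $S_{\leq 1} \to A \to A/\mathfrak{n}$, it suffices to show this map is not injective. Because $A$ is Artinian and $\mathbb{K}$ is algebraically closed, Hilbert's Nullstellensatz identifies the maximal ideals of $A$ with finitely many points $p_1, \ldots, p_r$ of affine space, and $A/\mathfrak{n} \simeq \mathbb{K}^r$ via evaluation at these points. Decomposing $A = \prod_{i=1}^r A_{\mathfrak{m}_i}$ into local components and $M = \bigoplus_{i=1}^r M_i$ accordingly, faithfulness forces each $M_i \neq 0$, so $r \leq \sum_i \dim M_i = n$. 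If equality $r = n$ holds, then each $M_i$ is one-dimensional, hence each local factor embeds $A_{\mathfrak{m}_i} \hookrightarrow \mathrm{End}_{\mathbb{K}}(M_i) = \mathbb{K}$, making $A$ reduced and $M = \bigoplus S/\mathfrak{m}_i$ semisimple by Lemma \ref{lemma-simple-module}. Contrapositively, if $M$ is not semisimple then $r < n$, the evaluation map $S_{\leq 1} \to \mathbb{K}^r$ cannot be injective, and $F^1 S_{\leq 1} \neq 0$.
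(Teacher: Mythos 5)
Your proposal is correct and takes essentially the same route as the paper: construct filtrations of $S_{\leq 1}$ and $M$ by powers of $\mathrm{rad}(A)$ with $A = S/\mathrm{ann}(M)$ Artinian, show $0$-subtightness via the numbering $(i,j,-k)$, reduce subtight-instability to non-triviality of the filtration on $S_{\leq 1}$, and obtain that non-triviality from a dimension count $r < n$ of maximal ideals of $A$ when $M$ is not semisimple. The only cosmetic difference is that the paper packages the count through the invariant $|\mathrm{supp}\,M|$ and applies the CRT to $M/\bigcap \mathfrak{m}_i M$, while you decompose $A$ itself into local Artinian factors and use faithfulness of each piece $M_i$; these are two renderings of the same dimension argument.
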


The proof is given later, after handful lemmas.

By $Ann(M)$ we denote the annihilator of module $M$ over $S$. Let $A = \sfrac{S}{Ann(M)}$ and let $\pi$ be a surjection from $S$ to $A$.

We use the quotient ring $A$ because it has much more convenient structure: in fact, as we show in Lemma \ref{lemma-artinian}, it is an Artinian ring, which will let us conclude a lot about its spectrum. By Hilbert's Nullstellensatz it will allow us to analyze affine functions over $\mathbb{K}$, which correspond to $S_{\leq 1}$. Artinian structure will also allow us to use the nilradical construction to obtain filtrations induced by powers of nilradical. However, we will have to struggle with one fundamental issue - we want the preimage of $rad(A)$ in $S$ to be non-trivial. In fact, we want more - preimage of $rad(A)$ has to have a non-trivial intersection with $S_{\leq 1}$.

The next three lemmas are quite classical commutative algebra arguments, but we present them to assure self-containment of the paper.

\begin{lemma}
\label{lemma-rad-power}
   There exist a natural number $r$ such that $rad(A)^r = 0$.
\end{lemma}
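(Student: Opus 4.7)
The plan is to use two ingredients only: $A$ is Noetherian (as a quotient of the polynomial ring $S$, which is Noetherian by Hilbert's basis theorem), and every element of $\mathrm{rad}(A)$ is by definition nilpotent. Combining these with commutativity of $A$ via a pigeonhole argument gives a uniform exponent $r$ that annihilates all of $\mathrm{rad}(A)$ simultaneously.

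Concretely, I will first invoke Noetherianness of $A$ to write $\mathrm{rad}(A) = (a_1, \ldots, a_k)$ for finitely many generators $a_1, \ldots, a_k \in A$. By definition of the nilradical, each $a_i$ is nilpotent, so there exist $N_1, \ldots, N_k \in \mathbb{N}$ with $a_i^{N_i} = 0$; let $N = \max_i N_i$. Next, I will exploit that $A$ is commutative: the ideal $\mathrm{rad}(A)^r$ is spanned as an $A$-module by products $a_{i_1} a_{i_2} \cdots a_{i_r}$, and commutativity lets me rewrite each such product as a monomial $a_1^{e_1} \cdots a_k^{e_k}$ with $e_1 + \cdots + e_k = r$. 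Setting $r = k(N-1) + 1$, the pigeonhole principle forces some $e_j \geq N$, and then $a_j^{e_j} = 0$ kills the monomial. Hence $\mathrm{rad}(A)^r = 0$, as desired.

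I expect essentially no obstacle here; the argument is a standard commutative-algebra one-liner once one notices that $A$ inherits Noetherianness from $S$. The only point worth flagging is the need for commutativity to collect like factors — without it, one would have to track orderings of the $a_i$'s and the pigeonhole step would break. If one prefers, the same conclusion can be reached by observing that the faithful action of $A$ on the $n$-dimensional $\mathbb{K}$-vector space $M$ embeds $A$ into $\mathrm{End}_{\mathbb{K}}(M)$, so $\dim_{\mathbb{K}} A < \infty$; the descending chain $\mathrm{rad}(A) \supseteq \mathrm{rad}(A)^2 \supseteq \cdots$ then stabilizes, and Nakayama's lemma applied to the finitely generated ideal $\mathrm{rad}(A)^r = \mathrm{rad}(A) \cdot \mathrm{rad}(A)^r$ (using that the nilradical lies in the Jacobson radical) forces it to vanish. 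Either route suffices; I would present the first since it is the most elementary and makes no reference to the later Lemma~\ref{lemma-artinian}.
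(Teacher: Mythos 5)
Your proof is correct and takes essentially the same route as the paper: Noetherianness of $A$ inherited from $S$ via Hilbert's basis theorem, finite generation of $rad(A)$, nilpotency of the generators, and a pigeonhole bound on the exponent. If anything, yours is slightly more careful: you verify the vanishing of the ideal power $rad(A)^r$ by spanning it with arbitrary products $a_{i_1}\cdots a_{i_r}$ of generators, whereas the paper's write-up only checks $a^r = 0$ for a single $a \in rad(A)$, which is literally a weaker statement (the ideal power is generated by products of $r$ possibly distinct elements, not just $r$-th powers). The same multinomial expansion handles the general product, so the gap in the paper is cosmetic, but your phrasing closes it outright; your threshold $r = k(N-1)+1$ is also marginally sharper than the paper's $r = k\max_i p_i$. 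The alternative route you sketch (finite $\mathbb{K}$-dimension of $A$, descending chain stabilizes, Nakayama) is also valid and is essentially what Lemmas \ref{lemma-artinian} and \ref{lemma-artinian-spectrum} set up later, but as you say the elementary argument is preferable here since it avoids any forward dependence.
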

\begin{proof}
    By Hilbert's basis theorem $S$ is Noetherian ring and since $A$ is its quotient, it is Noetherian too.
    Hence $rad(A)$ is finitely generated. Let us denote its generators as $g_1, \: g_2 \:, \ldots, \: g_k$. Let $p_i$ be such positive integers that $g_i^{p_i} = 0$ for $i = 1, \: 2, \: \ldots \: k$. Let $r = k\max(p_i)$. Then $rad(A)^r = 0$.
    
    Indeed, if $a \in rad(A)$, then $a = \sum_{i=1}^k{c_i g_i}$ for some $c_i \in A$ and $a^r = (\sum_{i=1}^k{c_i g_i})^r = \sum_{a_1 + a_2 + \ldots + a_k = r}{b_{(a_1, \: a_2, \: \ldots, \: a_k)} g_1^{a_1} g_2^{a_2} \ldots g_k^{a_k}}$. For every component of the sum by the pigeonhole principle for some $i$ we have $a_i \geq \max(p_i)$, thus $g_i^{a_i} = 0$ and so $b_{(a_1, \: a_2, \: \ldots, \: a_k)} g_1^{a_1} g_2^{a_2} \ldots g_k^{a_k} = 0$.
\end{proof}

\begin{lemma}
\label{lemma-artinian}
   $A$ is an Artinian ring.
\end{lemma}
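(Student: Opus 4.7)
The plan is to argue that $A$ is actually finite-dimensional as a $\mathbb{K}$-vector space, which trivially implies the Artinian property for ideals (a descending chain of ideals is a descending chain of $\mathbb{K}$-subspaces, which must stabilise once the dimension hits zero). This is a more quantitative statement than what the lemma asks for, but in this setting it is the most direct route, and it does not even need the preceding Lemma \ref{lemma-rad-power}.

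The key steps, in order, are: first, record that the $S$-module structure on $M$ is the same data as a $\mathbb{K}$-algebra homomorphism
\[
    \varphi \colon S \longrightarrow \operatorname{End}_{\mathbb{K}}(M),
\]
where $\varphi(s)$ is the endomorphism $m \mapsto s \cdot m$. Second, observe that by the very definition of the annihilator, $\ker \varphi = \operatorname{Ann}(M)$, so the first isomorphism theorem identifies $A = S/\operatorname{Ann}(M)$ with a $\mathbb{K}$-subalgebra of $\operatorname{End}_{\mathbb{K}}(M)$. Third, recall that $\dim_{\mathbb{K}} M = n$ (this is built into the setup of Section \ref{section-main-part}), so $\dim_{\mathbb{K}} \operatorname{End}_{\mathbb{K}}(M) = n^{2}$, and hence $\dim_{\mathbb{K}} A \le n^{2} < \infty$. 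Fourth, conclude that any strictly descending chain of ideals of $A$ is a strictly descending chain of $\mathbb{K}$-subspaces of a finite-dimensional space and therefore terminates, so $A$ is Artinian.

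There is essentially no obstacle here: the only thing to watch is that one really does have $\dim_{\mathbb{K}} M < \infty$ in the situation of the lemma, which is guaranteed by the standing hypothesis that $M$ has rank $n = \dim_{\mathbb{K}} S_{\leq 1}$. Everything else is formal. If one preferred a more ``ring-theoretic'' proof that uses Lemma \ref{lemma-rad-power}, one could alternatively note that $A$ is Noetherian (as a quotient of the Noetherian ring $S$) and has Krull dimension $0$ (every prime contains $\operatorname{rad}(A)$, and a prime $\mathfrak{p}$ with $A/\mathfrak{p}$ a finite-dimensional $\mathbb{K}$-algebra domain must be maximal since finite $\mathbb{K}$-algebra domains are fields), and then invoke the standard theorem that a Noetherian ring of Krull dimension zero is Artinian; but the embedding argument above is shorter and self-contained.
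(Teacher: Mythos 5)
Your argument is correct and is essentially identical to the paper's: the paper likewise defines the map $A \to \operatorname{Hom}_{\mathbb{K}}(M,M)$ sending $s + \operatorname{Ann}(M)$ to $(m \mapsto sm)$, checks injectivity to conclude $\dim_{\mathbb{K}} A < \infty$, and observes that a descending chain of ideals is a descending chain of $\mathbb{K}$-subspaces. Phrasing it via a homomorphism out of $S$ with kernel $\operatorname{Ann}(M)$ and the first isomorphism theorem is just a cosmetic repackaging of the same embedding.
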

\begin{proof}
    Since $A$ is an $\mathbb{K}$-algebra, it suffices to show that $A$ has finite dimension over $\mathbb{K}$ (every descending sequence of ideals is a descending sequence of $\mathbb{K}$-linear subspaces so if $A$ has finite dimension over $\mathbb{K}$ then such a sequence clearly stabilizes).
    
    Let us define $\psi \colon A \rightarrow Hom_\mathbb{K}(M, \: M)$ by $\psi(s + Ann(M)) = (m \rightarrow sm)$.
    By definition of $Ann(M)$ the map $\psi$ is well-defined. Clearly $\psi$ is $\mathbb{K}$-linear.
    
    Let us now observe that $\psi$ is injective. Indeed, if $\psi(s + Ann(M)) = 0$, then for all $m \in M$ $sm = 0$ and thus $s \in Ann(M)$. Thus $A$ is isomorphic as a $\mathbb{K}$-linear subspace with $\im(\psi)$, which has finite dimension over $\mathbb{K}$ as a linear subspace of $Hom_\mathbb{K}(M, \: M)$ which has finite dimension.
\end{proof}

\begin{lemma}
\label{lemma-artinian-spectrum}
   The spectrum of an Artinian ring is finite and equals its maximal spectrum.
\end{lemma}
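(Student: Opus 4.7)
The plan is to prove the two claims separately: first that every prime ideal of an Artinian ring $R$ is maximal, so that $\Spec R$ coincides with the maximal spectrum, and then that the maximal spectrum is finite. Both arguments rely on exploiting the descending chain condition on suitable sequences of ideals.

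For the first claim, I would take a prime $\mathfrak{p} \in \Spec R$ and study the quotient $R/\mathfrak{p}$, which is again Artinian and now an integral domain. The goal is to show that every nonzero element $x \in R/\mathfrak{p}$ is invertible. I would apply the descending chain condition to the chain of principal ideals $(x) \supseteq (x^2) \supseteq (x^3) \supseteq \ldots$, which must stabilize, giving $x^n = x^{n+1} y$ for some $n$ and some $y \in R/\mathfrak{p}$. Rewriting this as $x^n(1 - xy) = 0$ and using that $R/\mathfrak{p}$ is a domain with $x \neq 0$ forces $xy = 1$, so $R/\mathfrak{p}$ is a field and $\mathfrak{p}$ is maximal.

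For the finiteness claim, I would argue by contradiction, assuming an infinite sequence $\mathfrak{m}_1, \mathfrak{m}_2, \mathfrak{m}_3, \ldots$ of pairwise distinct maximal ideals and considering the descending chain
\[
\mathfrak{m}_1 \supseteq \mathfrak{m}_1 \cap \mathfrak{m}_2 \supseteq \mathfrak{m}_1 \cap \mathfrak{m}_2 \cap \mathfrak{m}_3 \supseteq \cdots,
\]
which stabilizes at some index $n$ by the Artinian hypothesis. Then $\mathfrak{m}_1 \cap \cdots \cap \mathfrak{m}_n \subseteq \mathfrak{m}_{n+1}$, and because $\mathfrak{m}_{n+1}$ is prime, one of the $\mathfrak{m}_i$ with $i \leq n$ must be contained in $\mathfrak{m}_{n+1}$; maximality of $\mathfrak{m}_i$ then forces $\mathfrak{m}_i = \mathfrak{m}_{n+1}$, contradicting distinctness.

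The main conceptual step is recognizing that the descending chain condition is the right tool in both cases: stabilization of the chain of powers $(x^k)$ in the first part and of intersections of maximal ideals in the second. No step is really an obstacle, since this is a standard commutative algebra argument; the only care needed is to ensure that the prime $\mathfrak{m}_{n+1}$ in the contradiction step really absorbs one of the $\mathfrak{m}_i$, which uses only primality together with the fact that a prime ideal containing a finite intersection of ideals must contain one of them.
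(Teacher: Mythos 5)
Your proposal is correct and follows essentially the same route as the paper: prove every prime is maximal by showing the Artinian domain $R/\mathfrak{p}$ is a field via stabilization of $(x) \supseteq (x^2) \supseteq \cdots$, then prove finiteness by stabilizing the chain of intersections of maximal ideals and invoking primality. The only cosmetic difference is that you cite the fact that a prime ideal containing a finite intersection of ideals must contain one of them as a black box, whereas the paper unwinds it by explicitly choosing elements $a_i \in \mathfrak{m}_i \setminus \mathfrak{m}_{l+1}$ and using the product.
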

This lemma is well-known, but we add a proof for completeness.
\begin{proof}
    Let $R$ be an Artinian ring and let $p$ be its prime ideal. We will first argue that $p$ is maximal.
    Let $\pi_p$ be a projection from $R$ to $\sfrac{R}{p}$. Let us observe that $\sfrac{R}{p}$ is also Artinian - for every descending sequence of ideals $I_1, \: I_2, \: \ldots$ in $\sfrac{R}{p}$ the sequence of preimages $\pi_p^{-1}(I_1), \: \pi_p^{-1}(I_2), \: \ldots$ in $R$ stabilizes and thus the sequence $I_1, \: I_2, \: \ldots$ also stabilizes.
    The ring $\sfrac{R}{p}$ is a domain and it suffices to show that it is a field.
    Let $x \in \sfrac{R}{p}, \: x \neq 0$. Let us consider the descending sequence $(x) \supseteq (x^2) \supseteq \ldots$ Since it stabilizes, for some $k$ holds equality $(x^k) = (x^{k+1})$ with $k \geq 1$. Thus there exist $a \in \sfrac{R}{p}$ such that $x^k = x^{k+1}a$. So $x^k(1-xa) = 0$ and since $\sfrac{R}{p}$ is a domain and $x \neq 0$ we obtain $xa = 1$, so $x$ is invertible. So $\sfrac{R}{p}$ is a field and thus $p$ is a maximal ideal of $R$.
    
    We will now show that maximal spectrum of $R$ is finite.
    Let us suppose otherwise. Then there exists an infinite sequence of maximal ideals $\mathfrak{m_i}$. Let us consider the descending sequence $\mathfrak{m_1}, \: \mathfrak{m_1} \cap \mathfrak{m_2}, \: \mathfrak{m_1} \cap \mathfrak{m_2} \cap \mathfrak{m_3}, \: \ldots$
    Since it stabilizes there exists such $l$ that $\mathfrak{m_1} \cap \mathfrak{m_2} \cap \ldots \cap \mathfrak{m_l} = \mathfrak{m_1} \cap \mathfrak{m_2} \cap \ldots \cap \mathfrak{m_l} \cap \mathfrak{m_{l+1}}$. Clearly $\mathfrak{m_i} \not \subseteq \mathfrak{m_j}$ for $i \neq j$.
    Let $a_i \in \mathfrak{m_i} \setminus \mathfrak{m_{l+1}}$ for $i = 1, \: 2, \: \ldots, \: l$.
    Then $a_1 a_2 \ldots a_l \in \mathfrak{m_1} \cap \mathfrak{m_2} \cap \ldots \cap \mathfrak{m_l}$, so by our assumption $a_1 a_2 \ldots a_l \in \mathfrak{m_1} \cap \mathfrak{m_2} \cap \ldots \cap \mathfrak{m_l} \cap \mathfrak{m_{l+1}}$, so $a_1 a_2 \ldots a_l \in \mathfrak{m_{l+1}}$.
    Thus by primeness of $\mathfrak{m_{l+1}}$ there exists $c$ such that $a_c \in \mathfrak{m_{l+1}}$. It is a contradiction with the definition with $a_i$.
\end{proof}

We now will be using definitions referring to block tensors, which are introduced in Section \ref{section-definitions}.

\begin{proposition}
\label{proposition-subtight}
   If $\pi^{-1}(rad(A)) \cap S_{\leq 1} \neq 0$, then $\sigma_M$ is $0$-subtight-unstable.
\end{proposition}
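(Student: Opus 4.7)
The plan is to use the nilradical filtration on $A$ to produce compatible filtrations on $S_{\leq 1}$ and on $M$ (and, dually, on $M^*$), upgrade them to direct sum decompositions of the three tensor factors, and check that with the numbering $(a_1, a_2, a_3) = (i, -i, i)$ the resulting block structure has $0$-subtight support, while the weighted sum $\overline{a}_1 + \overline{a}_2 + \overline{a}_3$ collapses to $\overline{a}_1$, which is strictly positive exactly under the hypothesis.

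Set $F_j := \pi^{-1}(rad(A)^j) \cap S_{\leq 1}$ and $U_j := rad(A)^j M$; by Lemma \ref{lemma-rad-power} both filtrations terminate. Fix $\mathbb{K}$-linear splittings $S_{\leq 1} = \bigoplus_{i_1 \geq 0} V_{1, i_1}$ with $F_j = \bigoplus_{i_1 \geq j} V_{1, i_1}$, and $M = \bigoplus_{i_3 \geq 0} V_{3, i_3}$ with $U_j = \bigoplus_{i_3 \geq j} V_{3, i_3}$. Put $V_{2, i_2} := V_{3, i_2}^*$ inside $M^* = \bigoplus V_{2, i_2}$, so that $U_j^\perp = V_{2, 0} \oplus \dots \oplus V_{2, j-1}$. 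Now if $s \in V_{1, i_1}$, $m \in V_{3, i_3}$ and $m^* \in V_{2, i_2}$ with $i_2 < i_1 + i_3$, the fact that $\pi(s) \in rad(A)^{i_1}$ gives $sm \in U_{i_1 + i_3}$, while $m^* \in U_{i_1+i_3}^\perp$, so $m^*(sm) = 0$. Hence a block of $\sigma_M$ can be nonzero only when $i_2 \geq i_1 + i_3$, i.e. $a_1(i_1) + a_2(i_2) + a_3(i_3) \leq 0$, and the support is $0$-subtight.

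It remains to compute the weighted averages. By construction $\dim_{\mathbb{K}} V_{2, i_2} = \dim_{\mathbb{K}} V_{3, i_2}$, so $n_2 = n_3$ and the opposite signs of $a_2, a_3$ yield $\overline{a}_2 + \overline{a}_3 = 0$. Therefore $\overline{a}_1 + \overline{a}_2 + \overline{a}_3 = \overline{a}_1 = \frac{1}{n} \sum_{j \geq 1} \dim_{\mathbb{K}} F_j$ by a routine telescoping, and this is strictly positive precisely when $F_1 \neq 0$, i.e. under the assumption $\pi^{-1}(rad(A)) \cap S_{\leq 1} \neq 0$. I expect the one delicate point to be the bookkeeping for the dual decomposition on $M^*$: the identification $V_{2, i_2} = V_{3, i_2}^*$ is exactly what makes the $V_2$ and $V_3$ contributions cancel, so that the whole argument reduces to the single positivity condition $F_1 \neq 0$.
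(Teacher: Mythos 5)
Your underlying approach---filter by powers of $rad(A)$, check that multiplication raises the filtration level, and use the duality between the $M^*$ and $M$ factors to make $\overline{a}_2 + \overline{a}_3$ cancel---is exactly the paper's. However there is a bookkeeping error in the support computation that, as written, breaks the numbering. With $\sigma_M \in S_{\leq 1}^* \otimes M^* \otimes M$ and your decompositions $V_{2,i} := V_{3,i}^* \subseteq M^*$, $V_{3,i} \subseteq M$, the $(i_1,i_2,i_3)$-block of $\sigma_M$ is detected by evaluating the trilinear form $T(s,m,m^*) = m^*(sm)$ with $s$ dual to $V_{1,i_1}$, with $m$ in the piece of $M$ dual to $V_{2,i_2}$, namely $V_{3,i_2}$, and with $m^*$ in the piece of $M^*$ dual to $V_{3,i_3}$, namely $V_{2,i_3}$. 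You instead take $m \in V_{3,i_3}$ and $m^* \in V_{2,i_2}$, i.e.\ you swap $i_2 \leftrightarrow i_3$ in the last two slots; what your computation actually establishes is the vanishing of the $(i_1,i_3,i_2)$-block when $i_2 < i_1 + i_3$. Re-indexing, the correct support is $\{i_3 \geq i_1 + i_2\}$ and the correct numbering is $(a_1,a_2,a_3) = (i,i,-i)$, not $(i,-i,i)$.

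A concrete check: for $M = \mathbb{K}[x]/(x^2)$ one gets $\sigma_M = a_0^* \otimes b_0^* \otimes b_0 + a_0^* \otimes b_1^* \otimes b_1 + a_1^* \otimes b_0^* \otimes b_1$ with $a_0,b_0$ in filtration degree $0$ and $a_1,b_1$ in degree $1$. The triple $(1,0,1)$ lies in the support; it violates your inequality $i_2 \geq i_1 + i_3$ (since $0 < 2$), and with $(i,-i,i)$ one gets $1 - 0 + 1 = 2 > 0$, so the tensor is \emph{not} $0$-subtight under your numbering. The paper's $(i,i,-i)$ gives $1 + 0 - 1 = 0$, as required. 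Once you fix the sign, the rest of your argument is correct and unaffected: $\dim V_{2,i} = \dim V_{3,i}$ forces $f_2 = f_3$ and hence $\overline{a}_2 + \overline{a}_3 = 0$ regardless of which of $a_2,a_3$ carries the minus sign, and the telescoping $\overline{a}_1 = \tfrac{1}{n}\sum_{j \geq 1} \dim F_j$ is right, so $\overline{a}_1 > 0$ precisely when $F_1 \neq 0$. After correcting the numbering your proof coincides with the paper's.
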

\begin{proof}
    We will use the same approach as in {\cite[Example 14]{blser_et_al:LIPIcs:2020:12686}}.
    
    By $r$ we denote the minimal natural number such that $rad(A)^r = 0$; by Lemma \ref{lemma-rad-power} such a natural number exists.
    Let us take the sequence
    \[M \supseteq rad(A) M \supseteq rad(A)^2 M \supseteq \ldots \supseteq rad(A)^r M = 0\].
    Let us take the sequence $(M_k)$
    of linear subspaces satisfying:
    
    \begin{itemize}
        \item $rad(A)^{k-1} M = M_{k-1} \oplus rad(A)^{k} M$ for $r > k > 1$,
        \item $M_r = 0$.
    \end{itemize}
    
    By definition $(M_i)_{i=0} ^{r}$ is a decomposition of $M$.
    
    Let $(R_i)_{i=0} ^{r}$ be a decomposition of $S_{\leq 1}$ induced by $\pi^{-1}({rad (A)}^k) \cap S_{\leq 1}$, i.e. $\pi^{-1}({rad (A)}^k) \cap S_{\leq 1} = R_k \oplus \pi^{-1}({rad (A)} ^{k+1}) \cap S_{\leq 1}$, $R_r = 0$.
    
    We have $R_1 \oplus R_2 \oplus \ldots \oplus R_r = \pi^{-1}(rad(A)) \cap S_{\leq 1}$, so the assumption $R_1 \oplus R_2 \oplus \ldots \oplus R_r \neq 0$.
    
    For all $i, \: j \geq 0$ we have
    $R_i \: M_j \subseteq (rad (A)^i + Ann(M)) \: rad (A)^j M = rad (A)^{i+j} M = \bigoplus_{k \geq i+j} M_k$.
    
    Let us now consider the numbering $a_1(i) = a_2(i) = i. \: a_3(i) = -i$.
    The structure tensor $\sigma_M$ is a block tensor with decompositions obtained from decompositions of $R_i$ and $M_i$.
    Let $I_1, \: I_2, \: I_3$ be its indexing sets.
    As we observed above, for all $(i_1, \: i_2, \: i_3)$ such that $a_1(i_1) + a_2(i_2) + a_3(i_3) > 0$ it holds that $\sigma_M(R_{i_1}, \: M_{i_2}, \: M_{i_3}) = 0$.
    It means that $\sigma_M$ has $0$-subtight support. Let $(f_1, \: f_2, \: f_3)$ be a relative block format of $\sigma_M$. Let $\overline{a}_k = \sum_{i \in I_k} f_k(i) a_k(i)$ for $k = 1, \: 2, \: 3$.
    We need to to show that $\overline{a}_1 + \overline{a}_2 + \overline{a}_3 > 0$.
    Since decompositions on the second and the third coordinate are dual, we have $I_2 = I_3, \: f_2 = f_3$ and thus
    $\overline{a}_2 + \overline{a}_3 = \sum_{i \in I_2} f_2(i) \: a_2(i) + \sum_{i \in I_3} f_3(i) \: a_3(i) = \sum_{i \in I_2} f_2(i) \: i + \sum_{i \in I_3} f_3(i) \: (-i) = 0$.
    So we just need to show $\overline{a}_1 > 0$.
    We have $\overline{a}_1 = \sum_{i \in I_1} f_1(i) \: a_1(i) = \sum_{i \in I_1} f_1(i) \: i$.
    In our case $I_1 = \{0, \: 1, \: \ldots, \: r\}$ and as we observed before $R_1 \oplus R_2 \oplus \ldots \oplus R_r \neq 0$, so there exist $s \in \{1, \: 2, \: \ldots, \: r\}$ such that $f_1(s) > 0$.
    Thus we have $\sum_{i \in I_1} f_1(i) \: i = \sum_{i = 0}^r f_1(i) \: i \geq f_1(s) \: s > 0$, which ends the proof.
\end{proof}

Now we will argue that $\pi^{-1}(rad(A)) \cap S_{\leq 1}$ is zero only for semisimple modules.
To prove it, we will analyze the spectrum of $A$ and the support of $M$. 

Let us define $supp \: M$ as a set of such maximal ideals $\mathfrak{m}$ of $S$ that $\mathfrak{m}M \neq M$.
Let us also denote by $V(I)$ the Zariski closure of ideal $I$, i.e. the set of ideals which include $I$.

\begin{lemma}
\label{lemma-supp-v}
   It holds that $supp \: M = V(Ann(M))$.
\end{lemma}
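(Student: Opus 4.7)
The plan is to establish the set equality by proving both inclusions, each via the contrapositive on a fixed maximal ideal $\mathfrak{m}$ of $S$. Membership in $V(\mathrm{Ann}(M))$ means $\mathrm{Ann}(M) \subseteq \mathfrak{m}$, and membership in $\mathrm{supp}\, M$ means $\mathfrak{m}M \neq M$, so the whole statement reduces to showing that these two conditions on $\mathfrak{m}$ are equivalent. The only external fact I need is that $M$ is finitely generated over $S$, which is immediate from $\dim_\mathbb{K} M = n < \infty$.

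For the inclusion $\mathrm{supp}\, M \subseteq V(\mathrm{Ann}(M))$, I will argue contrapositively that $\mathrm{Ann}(M) \not\subseteq \mathfrak{m}$ implies $\mathfrak{m}M = M$. Pick any $a \in \mathrm{Ann}(M) \setminus \mathfrak{m}$; maximality of $\mathfrak{m}$ yields $b \in S$ and $m \in \mathfrak{m}$ with $ab + m = 1$. For every $x \in M$ one then computes
\[x = abx + mx = 0 + mx \in \mathfrak{m}M,\]
using $a \in \mathrm{Ann}(M)$, whence $\mathfrak{m}M = M$.

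For the reverse inclusion I will assume $\mathrm{Ann}(M) \subseteq \mathfrak{m}$ and derive a contradiction from $\mathfrak{m}M = M$. Localizing at $\mathfrak{m}$ gives $\mathfrak{m}_\mathfrak{m} M_\mathfrak{m} = M_\mathfrak{m}$ over the local ring $S_\mathfrak{m}$; since $M_\mathfrak{m}$ is finitely generated, Nakayama's lemma forces $M_\mathfrak{m} = 0$. Picking $\mathbb{K}$-basis elements $x_1, \ldots, x_n$ of $M$, for each $i$ there is $s_i \in S \setminus \mathfrak{m}$ with $s_i x_i = 0$, and the product $s = s_1 \cdots s_n$ still lies outside $\mathfrak{m}$ by primality of $\mathfrak{m}$ and satisfies $sM = 0$. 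Hence $s \in \mathrm{Ann}(M) \setminus \mathfrak{m}$, contradicting $\mathrm{Ann}(M) \subseteq \mathfrak{m}$.

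I do not anticipate any genuine obstacle: both steps are textbook commutative algebra, and the Artinian machinery built up for the earlier lemmas does not enter the argument. The closest thing to a subtlety is observing that finite generation of $M$ over $S$ is automatic from its finite $\mathbb{K}$-dimension, so Nakayama applies to $M_\mathfrak{m}$ without any further justification.
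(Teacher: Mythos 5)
Your argument establishes that for a maximal ideal $\mathfrak{m}$ of $S$, one has $\mathfrak{m} \in supp\, M$ if and only if $Ann(M) \subseteq \mathfrak{m}$; equivalently, it proves $supp\, M = V(Ann(M)) \cap SpecMax(S)$. That is strictly weaker than the lemma. The set $V(Ann(M))$ consists of all prime ideals of $S$ containing $Ann(M)$, while $supp\, M$ is by definition a collection of maximal ideals, so the asserted equality also requires that every prime of $S$ containing $Ann(M)$ is maximal. Nothing in your write-up rules out a non-maximal prime $\mathfrak{p} \supseteq Ann(M)$, which would lie in $V(Ann(M))$ but cannot lie in $supp\, M$. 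Your closing remark that ``the Artinian machinery built up for the earlier lemmas does not enter the argument'' is exactly where the gap sits: the paper observes that $A = S/Ann(M)$ is a finite-dimensional $\mathbb{K}$-algebra, hence Artinian (Lemma \ref{lemma-artinian}), hence has only maximal primes (Lemma \ref{lemma-artinian-spectrum}), and then pulls this back along $S \to A$ to conclude every element of $V(Ann(M))$ is maximal. This is not a cosmetic point --- the maximality is used downstream, e.g.\ in the proof of Proposition \ref{proposition-supp-=-deg}, which invokes Lemma \ref{lemma-supp-v} precisely to declare the $\mathfrak{m}_i$ maximal.

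Modulo that omission, both of your inclusions are correct. The second is a sound variant of the paper's step: you localize at $\mathfrak{m}$, apply Nakayama to force $M_\mathfrak{m} = 0$, and clear denominators over a finite $\mathbb{K}$-basis to exhibit an element of $Ann(M) \setminus \mathfrak{m}$, whereas the paper applies the Cayley--Hamilton form of Nakayama directly to $M$ to obtain $m_0 \in \mathfrak{m}$ with $(1 - m_0)M = 0$ in one stroke. Both routes work; the paper's is marginally shorter.
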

\begin{proof}
    First we argue that $V(Ann(M))$ contains only maximal ideals. Since by Lemma \ref{lemma-artinian} we have that $A = \sfrac{S}{Ann(M)}$ is Artinian, by Lemma \ref{lemma-artinian-spectrum} all its prime ideals are maximal.
    Taking preimage induces bijection between $\Spec(\sfrac{S}{Ann(M)})$ and $V(Ann(M))$ and this bijection preserves maximality of an ideal.
    Thus all elements of $V(Ann(M))$ are maximal ideals.
    
    Let now $\mathfrak{m}$ be a maximal ideal of $S$. We need to show that $Ann(M) \not \subseteq \mathfrak{m} \iff \mathfrak{m}M = M$.
    
    On the one hand, if $Ann(M) \not \subseteq \mathfrak{m}$, then $\mathfrak{m} + Ann(M) = (1)$ and thus \[\mathfrak{m}M = \mathfrak{m}M + Ann(M)M = (\mathfrak{m} + Ann(M))M = (1)M = M.\]
    
    On the other hand, if $\mathfrak{m}M = M$, then by Nakayama's lemma there exist an element $m_0 \in \mathfrak{m}$ such that $(1-m_0)M = 0$ and thus $1-m_0 \in Ann(M)$. Since $1-m_0 \not \in \mathfrak{m}$, it holds that $Ann(M) \not \subseteq \mathfrak{m}$.
\end{proof}

\begin{lemma}
\label{lemma-supp-<=-deg}
   We have the inequality $|supp \: M | \leq \deg M$ and if the equality holds, then $M$ is semisimple.
\end{lemma}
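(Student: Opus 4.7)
The plan is to use the Artinian structure of $A = \sfrac{S}{Ann(M)}$ to decompose $M$ into local pieces and then simply count dimensions. By Lemma \ref{lemma-supp-v}, $supp \: M = V(Ann(M))$, which under the surjection $\pi$ corresponds bijectively with $\Spec A$. Since $A$ is Artinian by Lemma \ref{lemma-artinian}, Lemma \ref{lemma-artinian-spectrum} tells us that $\Spec A$ is finite and equals its maximal spectrum; write $\Spec A = \{\mathfrak{n}_1, \: \ldots, \: \mathfrak{n}_k\}$, so $|supp \: M| = k$.

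Next I would invoke the classical structure theorem for commutative Artinian rings to obtain a product decomposition $A \cong \prod_{i=1}^k A_{\mathfrak{n}_i}$ into local Artinian rings, implemented by orthogonal idempotents $e_1, \: \ldots, \: e_k \in A$ with $e_1 + \cdots + e_k = 1$. This induces a direct sum decomposition $M = \bigoplus_{i=1}^k e_i M$ of $A$-modules. The key observation is that each summand $e_i M$ is non-zero: if $e_i M = 0$, then $e_i$ annihilates $M$, so $e_i \in Ann_A(M)$; but $Ann_A(M) = 0$ directly from the definition $A = \sfrac{S}{Ann(M)}$, forcing $e_i = 0$ and contradicting that $e_i$ is the identity of the non-zero local factor $A_{\mathfrak{n}_i}$.

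Consequently $\dim_{\mathbb{K}}(e_i M) \geq 1$ for every $i$, and summing yields
\[ \deg M = \dim_{\mathbb{K}} M = \sum_{i=1}^k \dim_{\mathbb{K}}(e_i M) \geq k = |supp \: M|, \]
which is the stated inequality. For the equality statement, equality forces $\dim_{\mathbb{K}}(e_i M) = 1$ for every $i$; a one-dimensional $\mathbb{K}$-vector space has no proper non-zero $\mathbb{K}$-subspaces, in particular no proper non-zero $S$-submodules, so each $e_i M$ is a simple $S$-module. Then $M = \bigoplus_{i=1}^k e_i M$ is a direct sum of simple submodules, i.e., semisimple.

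The main step that needs justification is the idempotent decomposition $A \cong \prod_i A_{\mathfrak{n}_i}$. Although this is classical, the paper has been careful to include self-contained arguments for the Artinian facts it uses, so I would either cite a standard commutative algebra reference or give a short self-contained construction of the idempotents via iterated Chinese Remainder: by Lemma \ref{lemma-rad-power} the intersection $\bigcap_i \mathfrak{n}_i = rad(A)$ is nilpotent, the $\mathfrak{n}_i$ are pairwise comaximal, and lifting orthogonal idempotents modulo a nilpotent ideal produces the $e_i$. Once this decomposition is in place, the rest of the argument is a pure dimension count.
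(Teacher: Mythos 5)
Your proof is correct, and it takes a genuinely different (though related) route from the paper's. The paper works with the quotient $M/(\mathfrak{m}_1 M \cap \cdots \cap \mathfrak{m}_k M)$ and applies the Chinese Remainder Theorem for modules to split it as $\prod_i M/\mathfrak{m}_i M$; the inequality is then a dimension count on this quotient, and in the equality case one deduces that the intersection vanishes, whence $M$ itself splits. To conclude semisimplicity, the paper further identifies each one-dimensional piece $M/\mathfrak{m}_i M$ with $S/\mathfrak{m}_i$ and invokes Lemma \ref{lemma-simple-module}. You instead invoke the structure theorem for commutative Artinian rings to produce orthogonal idempotents $e_1, \ldots, e_k$ and decompose $M$ directly as $\bigoplus_i e_i M$, using $Ann_A(M)=0$ (which indeed holds by construction of $A$) to see each summand is non-zero. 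This makes the equality case more immediate, since $M$ is already a direct sum, and it lets you bypass Lemma \ref{lemma-simple-module} entirely by observing that a one-dimensional module over a $\mathbb{K}$-algebra is automatically simple. The trade-off is that your approach leans on the idempotent decomposition of Artinian rings, which the paper has not established and which is a somewhat heavier tool than the CRT-for-modules step; you correctly flag this and sketch how to derive it from Lemma \ref{lemma-rad-power}, pairwise comaximality, and idempotent lifting modulo a nilpotent ideal, which would suffice to make the argument self-contained. Both arguments are at bottom the same dimension count over the finitely many points of $\Spec A$.
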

\begin{proof}
    By Lemma \ref{lemma-supp-v} the set $supp \: M$ can be treated as a subset of $SpecMax(A)$.
    By Lemma \ref{lemma-artinian} we have that $A$ is Artinian and thus by Lemma \ref{lemma-artinian-spectrum} we have that $SpecMax(A)$ is finite, so $supp \: M$ is also finite.
    
    Let $supp \: M = \{\mathfrak{m}_1, \: \mathfrak{m}_2, \: \ldots, \: \mathfrak{m}_k \}$.
    
    We have $\deg M = dim_\mathbb{K} M \geq dim_\mathbb{K} \sfrac{M}{\mathfrak{m}_1M \cap \mathfrak{m}_2M \cap \ldots \cap \mathfrak{m}_kM}$.
    
    By the Chinese Remainder Theorem we have \[ \sfrac{M}{\mathfrak{m}_1M \cap \mathfrak{m}_2M \cap \ldots \cap \mathfrak{m}_kM} \simeq \sfrac{M}{\mathfrak{m}_1M} \times \sfrac{M}{\mathfrak{m}_2M} \times \ldots \times \sfrac{M}{\mathfrak{m}_kM}.\]
    By definition of $\mathfrak{m}_i$ we have $\sfrac{M}{\mathfrak{m}_iM} \neq 0$ and thus $dim_\mathbb{K} (\sfrac{M}{\mathfrak{m}_iM}) \geq 1$ and so \[ dim_\mathbb{K} (\sfrac{M}{\mathfrak{m}_1M} \times \sfrac{M}{\mathfrak{m}_2M} \times \ldots \times \sfrac{M}{\mathfrak{m}_kM}) \geq k .\]
    
    Thus $\deg M \geq \deg \sfrac{M}{\mathfrak{m}_1M \cap \mathfrak{m}_2M \cap \ldots \cap \mathfrak{m}_kM} \geq k = | supp \: M |$, which proves the first part of the statement.
    
    Let us now suppose that equality holds, i.e. $\deg M = k$.
    Thus all inequalities from the first part must be equalities and hence $\mathfrak{m}_1M \cap \mathfrak{m}_2M \cap \ldots \cap \mathfrak{m}_kM = 0$ and for all $i$ we have $dim_\mathbb{K} \sfrac{M}{\mathfrak{m}_iM} = 1$.
    First equality and Chinese Remainder Theorem give us \[M \simeq \sfrac{M}{\mathfrak{m}_1M} \times \sfrac{M}{\mathfrak{m}_2M} \times \ldots \times \sfrac{M}{\mathfrak{m}_kM}.\]
    Now by Lemma \ref{lemma-simple-module} it suffices to show that for all $i$ we have $\sfrac{M}{\mathfrak{m}_iM} \simeq \sfrac{S}{\mathfrak{m}_i}$ as $S$-modules.
    
    Since $\sfrac{M}{\mathfrak{m}_iM}$ is a linear space of dimension one, there exist $m \in \sfrac{M}{\mathfrak{m}_iM}$ such that $\mathbb{K}m = \sfrac{M}{\mathfrak{m}_iM}$.
    
    Let us define $S$-module homomorphism $\varphi \colon S \rightarrow \sfrac{M}{\mathfrak{m}_iM}$ by $\varphi(1) = m$, $\varphi(s) = sm$.
    Since $\mathbb{K}m = \sfrac{M}{\mathfrak{m}_iM}$, the map $\varphi$ is surjective.
    Thus $\sfrac{M}{\mathfrak{m}_iM} \simeq \sfrac{S}{\ker(\varphi)}$.
    Clearly $\mathfrak{m}_i \subseteq \ker(\varphi)$ and thus $\sfrac{S}{\ker(\varphi)} \subseteq \sfrac{S}{\mathfrak{m}_i}$. However $dim_\mathbb{K} \sfrac{M}{\mathfrak{m}_iM} = 1 = dim_\mathbb{K} \sfrac{S}{\mathfrak{m}_i}$ and so the inclusion must be an equality and thus indeed $\sfrac{M}{\mathfrak{m}_iM} \simeq \sfrac{S}{\mathfrak{m}_i}$ as $S$-modules.
\end{proof}

\begin{proposition}
\label{proposition-supp-=-deg}
   If it holds that $\pi^{-1}(rad(A)) \cap S_{\leq 1} = 0$, then we have an equality $|supp \: M | = \deg M$.
\end{proposition}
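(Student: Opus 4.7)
The plan is to exploit the hypothesis by reading it as saying that the composition $S_{\leq 1} \hookrightarrow S \xrightarrow{\pi} A \twoheadrightarrow A/rad(A)$ is injective, and then to bound $\dim_\mathbb{K} A/rad(A)$ from above in terms of $|supp \: M|$.

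First I would note that by Lemma \ref{lemma-artinian} the ring $A$ is Artinian and by Lemma \ref{lemma-artinian-spectrum} its spectrum equals its maximal spectrum and is finite. Via Lemma \ref{lemma-supp-v} the maximal ideals of $A$ correspond bijectively to the elements of $supp \: M$; write $\mathfrak{n}_1, \ldots, \mathfrak{n}_k$ for these maximal ideals, where $k = |supp \: M|$. Since $A$ is Artinian, $rad(A) = \mathfrak{n}_1 \cap \cdots \cap \mathfrak{n}_k$, and by the Chinese Remainder Theorem
\[ A/rad(A) \;\simeq\; \prod_{i=1}^{k} A/\mathfrak{n}_i. \]
Because $A$ is a finite-dimensional $\mathbb{K}$-algebra and $\mathbb{K}$ is algebraically closed, Hilbert's Nullstellensatz forces every residue field $A/\mathfrak{n}_i$ to coincide with $\mathbb{K}$, so that $\dim_\mathbb{K} A/rad(A) = k = |supp \: M|$.

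Next, the assumption $\pi^{-1}(rad(A)) \cap S_{\leq 1} = 0$ says exactly that the $\mathbb{K}$-linear map
\[ S_{\leq 1} \longrightarrow A/rad(A), \qquad f \longmapsto \pi(f) + rad(A), \]
is injective. Therefore $\dim_\mathbb{K} S_{\leq 1} \leq \dim_\mathbb{K} A/rad(A) = |supp \: M|$. Since $\dim_\mathbb{K} S_{\leq 1} = n$ and $\deg M = \dim_\mathbb{K} M = n$, this gives $\deg M \leq |supp \: M|$. Combining with the opposite inequality $|supp \: M| \leq \deg M$ from Lemma \ref{lemma-supp-<=-deg} yields the claimed equality.

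The only potential subtlety is the identification $A/\mathfrak{n}_i = \mathbb{K}$, but this is a direct consequence of $A$ being a finitely generated (in fact finite-dimensional) $\mathbb{K}$-algebra together with $\mathbb{K}$ being algebraically closed, via Nullstellensatz. Everything else is a bookkeeping assembly of the previous lemmas, so no substantive additional obstacle is expected.
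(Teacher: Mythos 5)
Your proof is correct, and it takes a slightly different route from the paper's. The paper argues by contradiction: assume $|supp \: M| < \deg M$, note that the affine functions form an $n$-dimensional space, that vanishing at the $r < n$ points of $V(Ann(M))$ imposes at most $r$ linear conditions, and conclude a nonzero element of $\pi^{-1}(rad(A)) \cap S_{\leq 1}$ exists, contradiction. You instead give a direct argument: you read the hypothesis as injectivity of $S_{\leq 1} \to A/rad(A)$, compute $\dim_{\mathbb{K}} A/rad(A) = |supp \: M|$ via the Chinese Remainder Theorem together with the Nullstellensatz identification $A/\mathfrak{n}_i \simeq \mathbb{K}$, and then compare dimensions to get $\deg M = n = \dim_{\mathbb{K}} S_{\leq 1} \leq |supp \: M|$, which combined with Lemma \ref{lemma-supp-<=-deg} forces equality. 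The underlying dimension count is the same (the paper's ``affine functions on $r$ points'' is exactly $A/rad(A) \simeq \mathbb{K}^r$), but your version makes the isomorphism explicit via CRT and avoids the contradiction scaffolding, which is arguably cleaner; the paper's phrasing keeps the geometric Nullstellensatz picture more visible. Both rely on the same inputs: $A$ Artinian with finite maximal spectrum, $rad(A)$ the intersection of the maximal ideals, and residue fields equal to $\mathbb{K}$.
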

\begin{proof}
    By definition $\pi^{-1}(rad(A)) = \sqrt{Ann(M)}$. The ideal $rad(A)$ is a nilradical of $A$ so it is an intersection of its all prime ideals.
    Since by Lemma \ref{lemma-artinian} we have that $A$ is Artinian, by Lemma \ref{lemma-artinian-spectrum} all its prime ideals are maximal and the number of maximal ideals is finite.
    Thus $rad(A)$ is an intersection of finite number of maximal ideals and so $\pi^{-1}(rad(A))$ is also intersection of finite number of maximal ideals.
    
    By Lemma \ref{lemma-supp-<=-deg} we have that $|supp \: M | \leq \deg M$. Let us suppose by contradiction that $|supp \: M | < \deg M$.
    By Lemma \ref{lemma-supp-v} we have that $|supp \: M | = |V(Ann(M))|$.
    Let $V(Ann(M)) = \{\mathfrak{m}_1, \: \mathfrak{m}_2, \ldots, \: \mathfrak{m}_r\}$, let us remark that by Lemma \ref{lemma-supp-v} ideals $\mathfrak{m}_i$ are maximal for all $i$.
    Moreover \[ \pi^{-1}(rad(A)) = \sqrt{Ann(M)} = \bigcap_{p_i \in V(Ann(M))} {p_i} = \mathfrak{m}_1 \cap \mathfrak{m}_2 \cap \ldots \cap \mathfrak{m}_r .\]
    
    By Hilbert's Nullstellensatz maximal ideals in polynomial ring over algebraically closed field can be treated as points, we will be using this equivalence.
    In it, the elements of $\pi^{-1}(rad(A)) \cap S_{\leq 1}$ are exactly affine functions $f$ such that $f(\mathfrak{m}_i) = 0$ for all $i$.
    
    We have $r = |V(Ann(M))| = |supp \: M | < \deg M = \deg S_{\leq 1}$.
    So the space of affine functions has the dimension at least $r+1$.
    Thus the dimension of a subspace of affine functions such that $f(\mathfrak{m}_i) = 0$ for all $i$ is at least one.
    Thus there exist a non-zero function $f$ satisfying these conditions.
    Since $f$ is an element of $\pi^{-1}(rad(A)) \cap S_{\leq 1}$, we obtain $\pi^{-1}(rad(A)) \cap S_{\leq 1} \neq 0$, which gives us expected contradiction.
\end{proof}

\begin{proof}[Proof of Theorem \ref{thm-main 2}]
\label{proof-thm-main 2}
    Let us suppose that $M$ is not semisimple. We need to prove that $\sigma_M$ is $0$-subtight-unstable.
    
    By Lemma \ref{lemma-supp-<=-deg} we have that $|supp \: M | < \deg M$. Thus by Proposition \ref{proposition-supp-=-deg} we have $\pi^{-1}(rad(A)) \cap S_{\leq 1} \neq 0$. Proposition \ref{proposition-subtight} implies that $\sigma_M$ is $0$-subtight-unstable.
\end{proof}

\begin{theorem}
\label{thm-slice-rank-bound}
    Let $t \in \mathbb{K}^n \otimes \mathbb{K}^n \otimes \mathbb{K}^n$ be a non-zero $s$-subtight-unstable tensor with numbering $(a_1, \: a_2, \: a_3)$ and relative block format $(f_1, \: f_2, \: f_3)$. Let also $\overline{a}_k = \sum_{i \in I_k} f_k(i) a_k(i)$ for $k = 1, \: 2, \: 3$. Then it holds that
    \[\widetilde{SR}(t) \leq n \exp \left( -\frac{(\overline{a}_1 + \overline{a}_2 + \overline{a}_3 - s)^2}{6 \sum_{k=1}^3 \sum_{i_k \in I_k} a_k^2} \right). \]
\end{theorem}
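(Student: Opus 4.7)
The plan is to bound the slice rank of $t^{\otimes m}$ by an explicit function of $m$ with exponential decay, then take $m$-th roots; the factor $m$ will cancel to give the claimed asymptotic bound. I view $t^{\otimes m}$ as a block tensor with index set $I_k^m$ on the $k$-th coordinate and block sizes $\prod_{j=1}^m n_k(i_k^{(j)})$ on a multi-index $\mathbf{i}_k = (i_k^{(1)}, \ldots, i_k^{(m)})$. A direct computation shows that its support inherits $s$-subtightness in averaged form: defining $\overline{A}_k(\mathbf{i}_k) := \frac{1}{m}\sum_{j=1}^{m} a_k(i_k^{(j)})$, every $(\mathbf{i}_1,\mathbf{i}_2,\mathbf{i}_3) \in \operatorname{supp} t^{\otimes m}$ satisfies $\overline{A}_1+\overline{A}_2+\overline{A}_3 \le s$. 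Interpreting the relative block format $f_k$ as a probability distribution on $I_k$, the averages $\overline{A}_k$ concentrate under the product measure around their expectations $\overline{a}_k$, and the hypothesis $\delta := \overline{a}_1+\overline{a}_2+\overline{a}_3 - s > 0$ forces the support into an exponentially thin atypical region.

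\textbf{Execution.} To make this quantitative, set $T_k^{(m)} := \{\mathbf{i}_k \in I_k^m : \overline{A}_k(\mathbf{i}_k) \le \overline{a}_k - \delta/3\}$. Any multi-index in $\operatorname{supp} t^{\otimes m}$ must belong to $T_k^{(m)}$ on at least one coordinate $k$, since otherwise summing $\overline{A}_k > \overline{a}_k - \delta/3$ over $k$ would contradict the subtightness inequality. Hoeffding's inequality applied to the i.i.d.\ variables $a_k(i_k^{(j)})$ drawn from $f_k$, combined with the elementary bound $(\max_{i_k}a_k(i_k) - \min_{i_k}a_k(i_k))^2 \le 4\sum_{i_k \in I_k}a_k(i_k)^2$, yields
\[ \sum_{\mathbf{i}_k \in T_k^{(m)}} \prod_{j=1}^m n_k(i_k^{(j)}) \;\le\; N_k^m \exp\!\Bigl(-\frac{m\delta^2}{6\sum_{i_k \in I_k} a_k(i_k)^2}\Bigr). \]
Using the standard fact (also exploited in \cite{blser_et_al:LIPIcs:2020:12686}) that a block tensor supported in a cylinder $T \times I_2^m \times I_3^m$ has slice rank at most the total block mass of $T$, applying the analogous bound in the other two modes, summing the three contributions, and bounding $N_k \le n$ gives
\[ SR(t^{\otimes m}) \;\le\; 3\, n^m \exp\!\Bigl(-\frac{m\delta^2}{6 \sum_{k=1}^{3}\sum_{i_k \in I_k} a_k(i_k)^2}\Bigr). \]
Taking $m$-th roots and letting $m \to \infty$ absorbs the factor $3^{1/m} \to 1$ and delivers the stated inequality.

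\textbf{Main obstacle.} The delicate point is matching the constants in the exponent exactly as stated. The denominator $6\sum_k\sum_{i_k}a_k(i_k)^2$ has to be assembled from the factor $1/9$ produced when one splits the deficit $\delta$ evenly into thirds, the factor $2$ appearing in Hoeffding's inequality, and the range-to-sum-of-squares estimate for the summands $a_k(i_k^{(j)})$, so the split of $\delta$ and the precise variant of the concentration inequality must be chosen with some care. The passage from block-tensor support to slice rank and the tensor-power block decomposition itself are essentially formal and transfer verbatim from \cite{blser_et_al:LIPIcs:2020:12686}; the only substantive new content is the quantitative concentration step.
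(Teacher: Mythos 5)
Your overall strategy --- tensor-power block decomposition, observing that $s$-subtightness is inherited in averaged form, covering the support of $t^{\otimes m}$ by three ``atypical'' cylinders, bounding each cylinder's block mass via Hoeffding, and passing to the limit --- is indeed the right one and essentially reproduces Bl\"aser--Lysikov's proof of \cite[Theorem~22]{blser_et_al:LIPIcs:2020:12686}. The paper itself does not rewalk this argument: its proof is a citation to that theorem, together with the observation (echoing \cite[Remark~26]{blser_et_al:LIPIcs:2020:12686}) that the argument operates on supports only and hence is field-independent, and a short remark that the denominator $\sum_k\sum_{i_k} a_k^2$ is nonzero for a nonzero $s$-subtight-unstable tensor. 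So you have done more than the paper here.

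There is, however, a concrete gap at exactly the point you flag as delicate, and it is not just a bookkeeping nuisance. With the even split $\delta/3$ and the bound $(\max a_k-\min a_k)^2\le 4\sigma_k^2$ where $\sigma_k^2:=\sum_{i_k}a_k(i_k)^2$, Hoeffding gives
\[\Pr\Bigl[\overline A_k\le\overline a_k-\tfrac{\delta}{3}\Bigr]\le\exp\Bigl(-\frac{2m(\delta/3)^2}{4\sigma_k^2}\Bigr)=\exp\Bigl(-\frac{m\delta^2}{18\,\sigma_k^2}\Bigr),\]
not $\exp(-m\delta^2/(6\sigma_k^2))$ as you wrote; and enlarging $\sigma_k^2$ to $\sum_k\sigma_k^2$ leaves $18$ in the denominator, so your argument as written only proves the theorem with $18$ in place of $6$. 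To recover the stated constant you must split $\delta$ \emph{unevenly} across the modes: take $\delta_k=\delta\,\sigma_k/(\sigma_1+\sigma_2+\sigma_3)$, so that each per-mode tail is at most $\exp\bigl(-m\delta^2/(2(\sigma_1+\sigma_2+\sigma_3)^2)\bigr)$, and then apply Cauchy--Schwarz, $(\sigma_1+\sigma_2+\sigma_3)^2\le 3\sum_k\sigma_k^2$, to reach $\exp\bigl(-m\delta^2/(6\sum_k\sum_{i_k}a_k^2)\bigr)$. You should also treat the degenerate case $\sigma_k=0$ for some $k$ (the paper's well-definedness remark only rules out $\sigma_k=0$ for all three $k$): if $\sigma_k=0$ then $\overline A_k\equiv\overline a_k$, so you simply omit mode $k$ from the cylinder covering and split $\delta$ over the remaining modes, and the Cauchy--Schwarz step still closes since the omitted $\sigma_k^2$ only increases the denominator.
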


\begin{proof}
    The proof is given in the proof of \cite[Theorem 22]{blser_et_al:LIPIcs:2020:12686}. Although in a statement there is an assumption that tensors are over $\mathbb{C}$, there proof does not use this assumption (it operates on supports of the tensors), which is also noted in \cite[Remark 26]{blser_et_al:LIPIcs:2020:12686}.
    
    Note that the fraction is well defined, since if it held that $a_k \equiv 0$ for $k = 1, \: 2, \: 3$, then $\overline{a}_k = 0$. We have that $\overline{a}_1 + \overline{a}_2 + \overline{a}_3 > s$, so $s < 0$. However since $t$ is non-zero, its support is non-empty, and by definition for some $(i_1, \: i_2, \: i_3)$ it holds that $0 = a_1(i_1) + a_2(i_2) + a_3(i_3) \leq s$. So we would have $s \geq 0 > s$, which is a contradiction.
\end{proof}

Having Theorem \ref{thm-main 2} and Theorem \ref{thm-slice-rank-bound} proved, we are almost ready to prove the main theorem, using similar tools as Bl\"aser and Lysikov. However we have to slightly change the assumptions, so we will need to prove one more lemma.

Let us remind that for tensor $t \in V_1 \otimes V_2 \otimes V_3$ we denote by $N(t)$ the maximal rank of flattening of $t$ (see Section \ref{section-definitions}).

\begin{lemma}
\label{lemma-N}
   We have an equality $N(\sigma_M) = n$.
\end{lemma}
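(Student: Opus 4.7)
The plan is to show that at least one of the three flattenings of $\sigma_M$ has rank equal to $n$; since the three ambient spaces $V_1 = S_{\leq 1}^*$, $V_2 = M^*$, $V_3 = M$ all have dimension $n$, the rank of every flattening is automatically bounded above by $n$, so this suffices. The crucial structural feature to exploit is that $1 \in S_{\leq 1}$ acts on $M$ as the identity; this is precisely what makes $\sigma_M$ a $1_{V_1}$-generic tensor, and it should force the flattenings on the $M$-side to be injective.

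Concretely, I would focus on the flattening $V_2^* \to V_1 \otimes V_3$, which under the identifications above becomes the linear map $M \to \mathrm{Hom}(S_{\leq 1}, M)$ that sends $m \in M$ to the map $f \mapsto f \cdot m$. If $m$ is in the kernel, then in particular $1 \cdot m = 0$, but $1 \cdot m = m$, so $m = 0$. Therefore this flattening is injective, hence has rank $\dim_{\mathbb{K}} M = n$. Combined with the trivial upper bound $N(\sigma_M) \leq \max(\dim V_1, \dim V_2, \dim V_3) = n$, this yields the equality $N(\sigma_M) = n$.

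There is essentially no obstacle here: the whole argument rides on the fact that $S_{\leq 1}$ contains the multiplicative unit of $S$, which is built into the definition of the structure tensor. If one wished, an entirely analogous argument applies to the flattening $V_3^* \to V_1 \otimes V_2$, which sends $\phi \in M^*$ to the bilinear form $(f,m) \mapsto \phi(fm)$; again, specialising to $f = 1$ shows that $\phi$ vanishes whenever its image vanishes, giving injectivity. So the proof will be just a few lines, noting the upper bound from the ambient dimensions and exhibiting one injective flattening via the identity action of $1$.
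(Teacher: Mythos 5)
Your proposal is correct and takes essentially the same route as the paper: both establish the upper bound $N(\sigma_M)\leq n$ from the ambient dimensions, and both use that $1\in S_{\leq 1}$ acts as the identity on $M$ to show a flattening on the $M$-side has rank $n$ (the paper observes the slice $\sigma_M(1,-)$ is the identity matrix, which is the same fact you express via injectivity of the flattening).
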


\begin{proof}
    Clearly $N(\sigma_M) \leq n$, since $n = \dim_{\mathbb{K}} S_{\leq 1} = \dim_{\mathbb{K}} M$ and $\sigma_M \in S_{\leq 1}^* \otimes M^* \otimes M$.
    Since $1 \in S_{\leq 1}$, the matrix of $\sigma_M(1, -)$ is an identity matrix and thus the ranks of flattenings of $\sigma_M$ over the second and the third coordinate are at least $\dim_{\mathbb{K}} M = n$, so $N(\sigma_M) \geq n$, and thus the equality must hold.
\end{proof}

We now present more general result which implies Theorem \ref{thm-main 1}. In particular, to prove Theorem \ref{thm-main 1} we will use $0$-subtight-instability.

\begin{proposition}
\label{proposition-general-subtight}
   In $\mathbb{K}^{n \times n \times n}$ with $n$ fixed, it is impossible to prove $\omega = 2$ using arbitrary restrictions from powers of elements of tensor family $G_n$ such that for every $t \in G_n$ tensor $t$ is $s$-subtight-unstable and satisfies $N(t) = n$.
\end{proposition}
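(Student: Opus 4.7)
The strategy is to combine the slice-rank bound of Theorem \ref{thm-slice-rank-bound} with the irreversibility argument of Christandl--Vrana--Zuiddam \cite{Christandl_Vrana_Zuiddam2}, in the same spirit as the proof of \cite[Theorem~23]{blser_et_al:LIPIcs:2020:12686} which handled the algebra case. The point is that Theorem \ref{thm-slice-rank-bound}, combined with the normalization $N(t)=n$, forces a strict inequality $\widetilde{SR}(t) < N(t)$ whose multiplicative gap survives tensor powers, and this kind of gap is known to preclude $\omega = 2$ via arbitrary restrictions.

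First, I would apply Theorem \ref{thm-slice-rank-bound} to every $t \in G_n$. Since $t$ is non-zero and $s$-subtight-unstable, and since the remark at the end of the proof of that theorem ensures positivity of the numerator inside the $\exp$, we obtain a uniform bound
\[\widetilde{SR}(t) \leq n \cdot e^{-\delta(t)}, \qquad \delta(t) > 0.\]
Together with $N(t) = n$ this is the strict inequality $\widetilde{SR}(t) < N(t)$. Using monotonicity of asymptotic slice rank under arbitrary restrictions, together with its multiplicativity under tensor powers $\widetilde{SR}(t^{\otimes m}) = \widetilde{SR}(t)^m$, any tensor $u$ obtained as an arbitrary restriction of $t^{\otimes m}$ satisfies $\widetilde{SR}(u) \leq n^m \cdot e^{-m\delta(t)}$, strictly smaller than $N(t^{\otimes m}) \leq n^m$ by a factor that decays exponentially in $m$.

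Finally, to conclude the impossibility of proving $\omega = 2$, I would appeal to the standard reduction: the matrix multiplication tensor $\langle q,q,q \rangle$ has asymptotic slice rank $q^2$, matching its flattening rank $N(\langle q,q,q \rangle) = q^2$, so a proof of $\omega = 2$ by arbitrary restrictions from powers of tensors in $G_n$ would require the ratio $\widetilde{SR}/N$ to tend to $1$ along the corresponding sequence of restricted tensors. The uniform multiplicative gap $e^{-m\delta(t)}$ established above rules this out, yielding the desired contradiction. The principal difficulty is not any new combinatorial or algebraic argument but the careful unpacking of this last reduction, namely formalizing what ``proving $\omega = 2$ via arbitrary restrictions of powers'' means in this framework and showing that it forces $\widetilde{SR}/N \to 1$; however, all the ingredients for this bookkeeping step are already in place in \cite{blser_et_al:LIPIcs:2020:12686} and \cite{Christandl_Vrana_Zuiddam2}, so the present proposition should follow by directly transplanting their argument to the family $G_n$.
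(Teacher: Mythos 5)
Your overall strategy is the same as the paper's: apply Theorem \ref{thm-slice-rank-bound} to get a strict gap $\widetilde{SR}(t) < N(t) = n$ and then invoke the irreversibility framework of Christandl--Vrana--Zuiddam together with \cite[Proposition~15]{blser_et_al:LIPIcs:2020:12686} to conclude $\omega$ cannot be shown to equal $2$. However, there is a genuine gap: you establish only a \emph{per-tensor} gap $\delta(t) > 0$, not a gap that is uniform over the family $G_n$. A sequence aiming to prove $\omega = 2$ may use a different $t_i \in G_n$ at each step, and nothing in your argument prevents $\delta(t_i) \to 0$, in which case the multiplicative gap you rely on vanishes in the limit. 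The paper addresses exactly this issue: it defines $\overline{B}(n)(t)$ in terms of the numbering data, observes that this quantity \emph{depends only on the support of $t$}, and then notes that for fixed $n$ there are only finitely many possible supports, so the infimum over $G_n$ is actually a minimum and hence strictly greater than $1$. This finiteness-of-supports observation is the key step you are missing; without it the infimum could be $1$ and the argument collapses.

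A secondary issue is that you invoke monotonicity of $\widetilde{SR}$ under restrictions and exact multiplicativity under tensor powers ($\widetilde{SR}(t^{\otimes m}) = \widetilde{SR}(t)^m$) as if they were immediate; the paper instead routes the reduction entirely through \cite[Proposition~15]{blser_et_al:LIPIcs:2020:12686} and \cite[Theorem~9]{Christandl_Vrana_Zuiddam2}, which bound the irreversibility from below by $B(n) = \inf_{t \in G_n} \log N(t) / \log \widetilde{SR}(t)$ and then the achievable $\omega$-bound from below by $2B(n)$. This avoids re-deriving multiplicativity properties of $\widetilde{SR}$ and gives a clean statement of exactly what ``proving $\omega = 2$ via arbitrary restrictions of powers'' means. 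You gesture at this bookkeeping being available in the references, which is fair, but the uniformity problem is not just bookkeeping --- it is a substantive point that your sketch leaves unresolved.
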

\begin{proof}
    We use the approach as in {\cite[Theorem 16]{blser_et_al:LIPIcs:2020:12686}}, but we have slightly weaker assumptions - we do not require conciseness.
    
    Note that zero tensor does not belong to $G_n$, since $N(\mathbf{0}) = 0$.
    
    If we let
    \[ B(n) \coloneqq \inf \left \{ \frac{\log N(t)}{\log \widetilde{SR} (t)} \mid t \in G_n \right \}, \]
    then by {\cite[Proposition 15]{blser_et_al:LIPIcs:2020:12686}} the irreversibility of any tensor with $N(t) \leq n$ is bounded from below by $B(n)$, and by \cite[Theorem 9]{Christandl_Vrana_Zuiddam2} {\cite[Theorem 7]{blser_et_al:LIPIcs:2020:12686}} the best bound on $\omega$ we can get is at least $2 B(n)$.
    
    By the assumption we have for all $t \in G_n$ that $N(t) = n$. Also by the assumption all tensors from family $G_n$ are $s$-subtight-unstable, so by Theorem \ref{thm-slice-rank-bound} we have for all $t \in G_n$ the bound
    \[ \widetilde{SR}(t) \leq n \exp \left( -\frac{(\overline{a}_1(t) + \overline{a}_2(t) + \overline{a}_3(t) - s)^2}{6 \sum_{k=1}^3 \sum_{i_k \in I_k} a_k(t)^2} \right). \]
    
    Let us denote
    \[ \overline{B}(n)(t) = \frac{\log n}{\log n -\frac{(\overline{a}_1(t) + \overline{a}_2(t) + \overline{a}_3(t) - s)^2}{6 \sum_{k=1}^3 \sum_{i_k \in I_k} a_k(t)^2} } \]
    and
    \[ \overline{B}(n) = \inf \left \{ \overline{B}(n)(t) \mid t \in G_n \right \}. \]
    
    We can bound $B(n)$ from below by $\overline{B}(n)$. So it is sufficient to show that $\overline{B}(n) > 1$.
    
    Clearly the term $\frac{(\overline{a}_1(t) + \overline{a}_2(t) + \overline{a}_3(t) - s)^2}{6 \sum_{k=1}^3 \sum_{i_k \in I_k} a_k(t)^2}$ is non-negative. Moreover, all tensors from the family $G_n$ are $s$-subtight-unstable, so by definition the term $\overline{a}_1(t) + \overline{a}_2(t) + \overline{a}_3(t) - s$ is strictly positive. It implies that for $t \in G_n$ the value of $\overline{B}(n)(t)$ is strictly greater than one. Moreover, the value of $\overline{B}(n)(t)$ depends only on the support of the tensor $t$. Since for fixed $n$ the number of possible supports of tensors is finite, the set of possible values of $\overline{B}(n)(t)$ is also finite.
    Thus we have
    \[ \overline{B}(n) = \inf \left \{ \overline{B}(n)(t) \mid t \in G_n \right \} = \min \left \{ \overline{B}(n)(t) \mid t \in G_n \right \} > 1, \]
    which ends the proof.
\end{proof}

\begin{proof}[Proof of Theorem \ref{thm-main 1}]
\label{proof-thm-main-1}
    By Theorem \ref{thm-main 2} all tensors $t \in F_n$ are $0$-subtight-unstable.
    By Lemma \ref{lemma-N} it also holds that $N(t) = n$.
    Applying result of Proposition \ref{proposition-general-subtight} for $G_n = F_n$, we obtain the claim.
\end{proof}

Now we present the exact bound of irreversibility of the structure tensor of a non-semisimple module. Note that we have already proved that this irreversibility is strictly greater than one.

\begin{corollary}
\label{corollary_exact_bound}
   If the module $M$ is non-semisimple, then the irreversibility of its structure tensor $\sigma_M$ is at least
   \[ \left( 1 - \left( \frac{ \left( \sum_{i=0}^{r-1}i s_i \right)^2 }{6 n^2 \log n \left( 2 \sum_{i=0}^{r-1} m_i^2 + \sum_{i=0}^{r-1} s_i  \right) } \right) \right)^{-1} ,\]
   where $r$ is minimal such that $rad(A)^r = 0$ ($r$ exists by Lemma \ref{lemma-rad-power}), $t_i = \dim_{\mathbb{K}} \pi^{-1}({rad (A)}^i) \cap S_{\leq 1}$ $s_i = t_i - t_{i+1}$, $m_i = \dim_{\mathbb{K}} \left( {rad (A)}^{i}M / {rad (A)}^{i+1}M \right) $.
\end{corollary}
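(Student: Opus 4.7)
The plan is to substitute the explicit $0$-subtight-unstable witness for $\sigma_M$ built in the proof of Proposition~\ref{proposition-subtight} directly into the slice-rank bound of Theorem~\ref{thm-slice-rank-bound}, and then convert the resulting upper bound on $\widetilde{SR}(\sigma_M)$ into a lower bound on irreversibility via Lemma~\ref{lemma-N} together with the inequality stating that irreversibility is at least $\log N(t)/\log \widetilde{SR}(t)$, from \cite[Proposition 15]{blser_et_al:LIPIcs:2020:12686} (exactly as used inside the proof of Proposition~\ref{proposition-general-subtight}).

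First, since $M$ is non-semisimple, Theorem~\ref{thm-main 2} tells me that $\sigma_M$ is $0$-subtight-unstable, and inspecting the proof of Proposition~\ref{proposition-subtight} I read off the explicit block data needed: on the first factor $S_{\leq 1}$ the decomposition $R_0 \oplus \cdots \oplus R_{r-1}$ comes from the filtration $\pi^{-1}(\mathrm{rad}(A)^i) \cap S_{\leq 1}$ and $\dim_{\mathbb{K}} R_i = s_i$; on the second and third factors $M^*$ and $M$ the decompositions come from the filtration $\mathrm{rad}(A)^i M$ and satisfy $\dim_{\mathbb{K}} M_i = m_i$; the index set is $I_1 = I_2 = I_3 = \{0, 1, \ldots, r-1\}$; the numbering is $a_1(i) = a_2(i) = i$, $a_3(i) = -i$; and $s = 0$.

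Second, I substitute into Theorem~\ref{thm-slice-rank-bound}. Because $\dim_{\mathbb{K}} S_{\leq 1} = \dim_{\mathbb{K}} M = n$, the relative block format is $f_1(i) = s_i/n$ and $f_2(i) = f_3(i) = m_i/n$, and the $m_i$-contributions on the second and third coordinates cancel by duality of $a_2$ and $a_3$, leaving $\overline{a}_1 + \overline{a}_2 + \overline{a}_3 = \tfrac{1}{n}\sum_{i=0}^{r-1} i s_i$. Expanding the denominator $6 \sum_{k=1}^{3} \sum_{i_k \in I_k} a_k^2$ of Theorem~\ref{thm-slice-rank-bound} in terms of the block dimensions produces the factor $6 \bigl(2\sum_{i=0}^{r-1} m_i^2 + \sum_{i=0}^{r-1} s_i\bigr)$ appearing in the statement, so Theorem~\ref{thm-slice-rank-bound} yields $\widetilde{SR}(\sigma_M) \leq n \exp(-X)$ with
\[X \; = \; \frac{\bigl(\sum_{i=0}^{r-1} i s_i\bigr)^2}{6 n^2 \bigl(2\sum_{i=0}^{r-1} m_i^2 + \sum_{i=0}^{r-1} s_i\bigr)}.\]

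Finally, $N(\sigma_M) = n$ by Lemma~\ref{lemma-N}, so the irreversibility of $\sigma_M$ is bounded below by
\[\frac{\log N(\sigma_M)}{\log \widetilde{SR}(\sigma_M)} \; \geq \; \frac{\log n}{\log n - X} \; = \; \Bigl(1 - \tfrac{X}{\log n}\Bigr)^{-1},\]
which, after inserting the explicit value of $X$, is precisely the bound claimed. The only genuinely delicate step is the bookkeeping in expanding $\sum_{k, i_k} a_k^2$ in terms of the $s_i$ and $m_i$; once that reduction is pinned down, everything else is a direct substitution into and rearrangement of results already established earlier in the paper.
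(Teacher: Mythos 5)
Your overall strategy matches the paper's proof exactly: take the explicit $0$-subtight-unstable block data for $\sigma_M$ constructed in the proof of Proposition~\ref{proposition-subtight}, substitute it into the asymptotic slice-rank bound of Theorem~\ref{thm-slice-rank-bound}, and convert the resulting upper bound on $\widetilde{SR}(\sigma_M)$ into a lower bound on irreversibility via $N(\sigma_M)=n$ (Lemma~\ref{lemma-N}) and the inequality $i(t)\ge \log N(t)/\log\widetilde{SR}(t)$. The paper's own proof is an equally terse ``plug in the definitions of $a_k$ and $\overline{a}_k$ and rearrange,'' so in that respect you are faithfully reproducing it.

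However, the one step you yourself call ``genuinely delicate'' is asserted, not checked, and it does not hold under a literal reading of the formula in Theorem~\ref{thm-slice-rank-bound}. With the numbering $a_1(i)=a_2(i)=i$, $a_3(i)=-i$ and block index sets $I_1=I_2=I_3=\{0,\dots,r-1\}$, the denominator $\sum_{k=1}^3\sum_{i_k\in I_k} a_k(i_k)^2$ evaluates to $3\sum_{i=0}^{r-1} i^2$, which carries no dependence on the block dimensions $s_i,m_i$ whatsoever. Your claim that ``expanding the denominator in terms of the block dimensions produces the factor $2\sum m_i^2+\sum s_i$'' therefore needs justification: either the statement of Theorem~\ref{thm-slice-rank-bound} in the paper (transcribed from~\cite[Theorem 22]{blser_et_al:LIPIcs:2020:12686}) implicitly uses a dimension-weighted norm of the numbering, or the numbering used in Corollary~\ref{corollary_exact_bound} is not the one from Proposition~\ref{proposition-subtight}, or there is a typo somewhere that propagates into the corollary. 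You should have flagged this mismatch explicitly rather than asserting that the substitution works out, since as written the identity $\sum_{k}\sum_{i_k\in I_k}a_k(i_k)^2 = 2\sum_{i=0}^{r-1}m_i^2 + \sum_{i=0}^{r-1}s_i$ does not follow from anything stated in the paper.
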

\begin{proof}
    As we observed, if $M$ is non-semisimple, then its structure tensor is $0$-subtight unstable and satisfies $N(\sigma_M) = n$, so it satisfies the assumptions from Proposition \ref{proposition-general-subtight}, so we can use the bound
    \[ i(\sigma_M) \geq \frac{\log n}{\log n -\frac{(\overline{a}_1(t) + \overline{a}_2(t) + \overline{a}_3(t))^2}{6 \sum_{k=1}^3 \sum_{i_k \in I_k} a_k(t)^2} }. \]
    Now it is sufficient to use definitions of $a_i$ and $\overline{a}_i$ from the proof of Proposition \ref{proposition-subtight} and we obtain the desired bound. Analogical bound for the structure tensors of non-semisimple algebras is given in \cite[Corollary 23]{blser_et_al:LIPIcs:2020:12686}.
\end{proof}

\section{Characteristic zero}
\label{section-generalisations}

We will now show that in Proposition \ref{proposition-general-subtight} we can weaken the assumption of $s$-subtight-instability for fields of characteristic zero.

\begin{proposition}
\label{proposition-general-comb_for_c}
   In $\mathbb{C}^{n \times n \times n}$ with $n$ fixed, it is impossible to prove $\omega = 2$ using arbitrary restrictions from powers of elements of tensor family $G_n$ such that for every $t \in G_n$ tensor $t$ is unstable and satisfies $N(t) = n$.
\end{proposition}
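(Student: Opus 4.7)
The plan is to reduce Proposition \ref{proposition-general-comb_for_c} to Proposition \ref{proposition-general-subtight} by replacing each unstable tensor with a combinatorially unstable representative in its $GL^3$-orbit. The quantities $\widetilde{SR}(t)$ and $N(t)$ are invariant under the action of $GL(V_1) \times GL(V_2) \times GL(V_3)$, so when estimating the irreversibility of $t$ we may freely replace it by any element of its orbit. Over $\mathbb{C}$, the Hilbert--Mumford numerical criterion asserts that $t$ is unstable if and only if there exists a one-parameter subgroup $\lambda \colon \mathbb{C}^\times \to SL(V_1) \times SL(V_2) \times SL(V_3)$ such that $\lim_{s \to 0} \lambda(s) \cdot t = 0$. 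After conjugation we may assume each $\lambda_k$ is diagonal, which produces integer weights $w_k \colon I_k \to \mathbb{Z}$ satisfying $\sum_{i \in I_k} n_k(i) w_k(i) = 0$ (since $\lambda_k \in SL(V_k)$) and $w_1(i_1) + w_2(i_2) + w_3(i_3) > 0$ on the support of $t$ in the resulting eigenbasis (the destabilization condition). Hence, after a change of basis, $t$ has combinatorially unstable support.

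Next I would observe that combinatorial instability implies $s$-subtight-instability with $s = -1$. After rescaling the $w_k$ to integers with $w_1 + w_2 + w_3 \geq 1$ on the support, set $a_k = -w_k$; then $a_1(i_1) + a_2(i_2) + a_3(i_3) \leq -1$ on the support, so the support is $(-1)$-subtight. Moreover $\overline{a}_k = -\sum_{i \in I_k} f_k(i) w_k(i) = -\frac{1}{N_k} \sum_{i \in I_k} n_k(i) w_k(i) = 0$, so $\overline{a}_1 + \overline{a}_2 + \overline{a}_3 = 0 > -1 = s$, verifying $(-1)$-subtight-instability. Theorem \ref{thm-slice-rank-bound} then yields a bound on the asymptotic slice rank of the transformed tensor, and by $GL^3$-invariance this bound also applies to the original $t$.

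The rest of the proof mirrors Proposition \ref{proposition-general-subtight}. The exponent $(\overline{a}_1 + \overline{a}_2 + \overline{a}_3 - s)^2 / (6 \sum_k \sum_{i_k \in I_k} a_k(i_k)^2)$ appearing in Theorem \ref{thm-slice-rank-bound} is scale-invariant in $(a_1, a_2, a_3, s)$, so the integer rescaling does not affect the bound. For fixed $n$ there are only finitely many possible supports in $\{1, \ldots, n\}^3$, hence only finitely many relevant values of $\overline{B}(n)(t)$, each strictly greater than $1$; taking the minimum gives a uniform bound $\widetilde{SR}(t) \leq n \rho$ with $\rho < 1$ across all $t \in G_n$. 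Combined with $N(t) = n$ this yields $B(n) > 1$, and the inequality $\omega \geq 2 B(n) > 2$ then completes the argument exactly as in the proof of Proposition \ref{proposition-general-subtight}.

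The main obstacle is passing from the purely geometric \emph{unstable} hypothesis (defined via Zariski closure of the $SL^3$-orbit) to the combinatorial input required by Theorem \ref{thm-slice-rank-bound}; this is precisely what the Hilbert--Mumford criterion provides over $\mathbb{C}$, and it is the reason the result is stated over characteristic zero rather than for an arbitrary algebraically closed field. Verifying uniformity across the potentially infinite family $G_n$ then reduces to the scale-invariance remark and the finiteness of possible supports, as above.
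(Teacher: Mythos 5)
Your proposal is correct, but it takes a genuinely different route from the paper's. The paper's own proof is much shorter and relies on two deep facts cited from Bl\"aser--Lysikov: that the set of possible values of $\widetilde{SR}(t)$ for tensors in $\mathbb{C}^{n\times n\times n}$ is \emph{finite} (this follows from Brion's theorem on moment polytopes together with Christandl--Vrana--Zuiddam's representation-theoretic formula for asymptotic slice rank), and that unstable tensors satisfy $\widetilde{SR}(t) < n$. These two facts give $B(n) = \min\{\log N(t)/\log\widetilde{SR}(t)\} > 1$ immediately. Your argument instead converts the geometric hypothesis (instability) into the combinatorial input needed by Theorem~\ref{thm-slice-rank-bound} via the Hilbert--Mumford criterion, then runs the machinery already assembled for Proposition~\ref{proposition-general-subtight}. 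The step from combinatorial instability to $(-1)$-subtight instability (setting $a_k = -u_k$ so that $\overline{a}_1 + \overline{a}_2 + \overline{a}_3 = 0 > -1$) is correct, and the scale-invariance of the exponent in Theorem~\ref{thm-slice-rank-bound} lets you clear denominators without affecting the bound. The payoff of your route is that it gives an explicit quantitative bound on $\widetilde{SR}$ via Theorem~\ref{thm-slice-rank-bound} and avoids invoking the deep asymptotic-spectrum results; the payoff of the paper's route is brevity.

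One small inaccuracy in your closing remark: the Hilbert--Mumford criterion (in Kempf's form) is valid over \emph{any} algebraically closed field, not only characteristic zero, so it is not the reason the paper restricts to $\mathbb{C}$. The actual reason the paper's version needs $\mathbb{C}$ is the pair of cited results on asymptotic slice rank (Brion and Christandl et al.), which are formulated over $\mathbb{C}$. In fact, since your argument bypasses those results entirely and Theorem~\ref{thm-slice-rank-bound} holds over an arbitrary field, your proof would yield the conclusion over any algebraically closed field directly, subsuming the field-embedding argument in Proposition~\ref{proposition-general-comb}. This would be worth pointing out explicitly if you wanted to polish the write-up.
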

\begin{proof}
    From {\cite[Theorem 5]{blser_et_al:LIPIcs:2020:12686}} (originally in \cite{Brion}) and {\cite[Theorem 4]{blser_et_al:LIPIcs:2020:12686}} (originally in \cite{Christiandl_slice_rank}) it follows that the set of possible values of $\widetilde{SR} (t)$ for tensors in $\mathbb{C}^{n \times n \times n}$ is finite.
    Therefore, the set of possible ratios $\frac{\log N(t)}{\log \widetilde{SR} (t)}$ is also finite.
    So we can let
    \[ B(n) \coloneqq \min \left \{ \frac{\log N(t)}{\log \widetilde{SR} (t)} \mid t \in G_n \right \}, \]
    and by {\cite[Proposition 15]{blser_et_al:LIPIcs:2020:12686}} the irreversibility of any tensor with $N(t) \leq n$ is bounded from below by $B(n)$, and also by \cite[Theorem 9]{Christandl_Vrana_Zuiddam2} {\cite[Theorem 7]{blser_et_al:LIPIcs:2020:12686}} the best bound on $\omega$ we can get is at least $2 B(n)$.
    
    Since all tensors form $G_n$ are unstable, we have by {\cite[Theorem 3]{blser_et_al:LIPIcs:2020:12686}} that for all $t \in G_n$ we have $\widetilde{SR} (t) < n$.
    Since we assumed that for all $t \in G_n$ we have $N(t) = n$, it holds that $2 B(n) > 2$, so it is impossible to show $\omega = 2$.
\end{proof}

\begin{proposition}
\label{proposition-general-comb}
   In $\mathbb{K}^{n \times n \times n}$ with $n$ fixed with characteristic of $\mathbb{K}$ equal to zero, it is impossible to prove $\omega = 2$ using arbitrary restrictions from powers of elements of tensor family $G_n$ such that for every $t \in G_n$ tensor $t$ is combinatorially unstable and satisfies $N(t) = n$.
\end{proposition}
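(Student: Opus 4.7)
The plan is to extend the proof of Proposition \ref{proposition-general-comb_for_c} from $\mathbb{C}$ to an arbitrary characteristic-zero algebraically closed field $\mathbb{K}$ by a Lefschetz-type transfer. Two ingredients are needed: first, that combinatorial instability implies geometric instability over $\mathbb{K}$; second, that for an unstable tensor $t$ with $N(t)=n$, the asymptotic slice rank satisfies $\widetilde{SR}(t) < n$. Given both, the argument defining the irreversibility bound $B(n)$ and concluding $2B(n) > 2$ carries over unchanged.

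The first ingredient is already recorded in the paper: combinatorially unstable tensors are unstable by \cite[Proposition 20]{blser_et_al:LIPIcs:2020:12686}, whose proof via the Hilbert--Mumford criterion is characteristic-free. For the second, I would argue by transfer to $\mathbb{C}$. Any fixed tensor $t \in \mathbb{K}^{n \times n \times n}$ involves only finitely many scalars, all lying in a finitely generated subfield $\mathbb{K}_0 \subset \mathbb{K}$ of characteristic zero. Such $\mathbb{K}_0$ has cardinality at most the continuum and, since $\mathbb{C}$ has transcendence degree $2^{\aleph_0}$ over $\mathbb{Q}$, admits a field embedding into $\mathbb{C}$. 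This produces a tensor $t_{\mathbb{C}} \in \mathbb{C}^{n \times n \times n}$ with identical coordinate description, hence identical support. Instability is the vanishing condition for $0$ in the Zariski closure of an $SL$-orbit, and slice rank is the minimum $k$ such that $t$ lies in the sum of $k$ slice-rank-one loci; both properties are cut out by polynomial conditions and are preserved under extensions of algebraically closed fields of the same characteristic, and consequently so is $\widetilde{SR}$. Thus $t_{\mathbb{C}}$ is unstable, \cite[Theorem 3]{blser_et_al:LIPIcs:2020:12686} yields $\widetilde{SR}(t_{\mathbb{C}}) < n$, and this transfers back to $\widetilde{SR}(t) < n$.

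With these in hand, the proof of Proposition \ref{proposition-general-comb_for_c} carries over verbatim. Setting $B(n) := \inf\{\log N(t)/\log \widetilde{SR}(t) \mid t \in G_n\}$, each ratio is strictly greater than one since $\widetilde{SR}(t) < n = N(t)$ for every $t \in G_n$. The finiteness of possible values of $\widetilde{SR}$, deduced in the complex case from Brion's theorem \cite[Theorem 5]{blser_et_al:LIPIcs:2020:12686}, holds over $\mathbb{K}$ by the same transfer, upgrading ``all ratios $>1$'' to ``infimum $>1$'' and giving $2B(n) > 2$, which rules out $\omega = 2$.

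The main obstacle I anticipate is verifying rigorously that $\widetilde{SR}$ is preserved under extension of algebraically closed characteristic-zero fields. The cleanest route is to observe that the slice-rank-$\leq k$ locus inside $\mathbb{K}^{n \times n \times n}$ is a constructible subset defined over $\mathbb{Q}$; applying this to each tensor power $t^{\otimes m}$ and then passing to the $\limsup$ defining $\widetilde{SR}$ yields the invariance, after which the remaining bookkeeping is exactly that of Proposition \ref{proposition-general-comb_for_c}.
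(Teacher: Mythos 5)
Your proposal is correct but takes a genuinely different route from the paper's. The paper argues by contraposition: it assumes a hypothetical proof of $\omega=2$ over $\mathbb{K}$ from powers of tensors in $G_n$, collects into a countable subfield $\widetilde{\mathbb{L}}\subset\mathbb{K}$ all the scalars that the construction uses (the entries of the tensors $t_i$, the flattening witnesses $x_{i_j}$, and the restriction coefficients $Z_i$), passes to the algebraic closure $\mathbb{L}$, embeds $\mathbb{L}$ into $\mathbb{C}$, and transfers the whole construction; the transferred tensors $\widetilde{t_i}$ are still combinatorially unstable (instability of a tensor's support is a purely combinatorial property, unchanged by the field embedding) and still satisfy $N(\widetilde{t_i})=n$ (witnessed by the transferred $x_{i_j}$), so one contradicts Proposition~\ref{proposition-general-comb_for_c}. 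Crucially, this route never compares $\widetilde{SR}$ across fields. You instead work directly over $\mathbb{K}$: you transfer a single tensor at a time to $\mathbb{C}$ to establish the two ingredients needed to rerun the argument of Proposition~\ref{proposition-general-comb_for_c}, namely $\widetilde{SR}(t)<n$ for unstable $t$ and the finiteness of the set of attained values of $\widetilde{SR}$, and then conclude $B(n)>1$. The price of your route is that you must verify that $SR$ --- hence $\widetilde{SR}$ --- is invariant under extensions of algebraically closed fields of equal characteristic. Your sketch of this is sound: the locus $\{SR\leq k\}$ is a finite union, over assignments of a slicing direction to each summand, of images of polynomial maps defined over $\mathbb{Q}$, hence a $\mathbb{Q}$-constructible set, and membership of a point with coordinates in a common subfield is independent of the ambient algebraically closed field; applying this to each $t^{\otimes m}$ and taking the $\limsup$ gives the invariance of $\widetilde{SR}$. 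Both routes are valid. The paper's is the more elementary in that it sidesteps constructibility entirely, at the cost of carefully tracking all auxiliary data through the transfer; yours requires the constructibility input but in exchange yields the pointwise conclusion $\widetilde{SR}(t)<n$ over $\mathbb{K}$, which the contrapositive argument never establishes.
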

\begin{proof}
    Let us suppose that it is possible to prove $\omega = 2$ over $\mathbb{K}$ using arbitrary restrictions. We will prove that then it is possible to prove $\omega = 2$ over $\mathbb{C}$ using arbitrary restrictions and obtain contradiction with Proposition \ref{proposition-general-comb_for_c}.
    
    Possibility to prove $\omega = 2$ over $\mathbb{K}$ using arbitrary restrictions means that there exists a sequence of pairs $(t_i, \: N_i)$ such that for all $i$ it holds that $t_i \in G_n$ and using arbitrary restrictions (which are basically some linear equations) we can obtain from $t_i^{\otimes N_i}$ large matrix multiplication, i.e. such that for a limit at infinity we obtain $\omega = 2$.
    
    For a tensor $t_i$ let $(x_{i_j})$ be a (finite) tuple of vectors corresponding to the flatenning of $t_i$ with maximal rank (each of three flattenings is a linear map, we choose basis of tensors $x_{i_j}$ for which contractions with $t_i$ give linear space of maximal dimension). Let also $Z_i$ be a set of coefficients of arbitrary restrictions (which are basically linear equations) used with $t_i$. All sets $Z_i$ are finite.
    
    Now we want to transfer this sequence to $\mathbb{C}$. Firstly let us observe that we only need to consider powers of tensors from sequence $t_i$, which means that if $\mathbb{K}$ is large we can drop some part of it. More precisely, let $\mathbb{\widetilde{L}}$ be a smallest subfield of $\mathbb{K}$ such that $\mathbb{\widetilde{L}}$ contains all elements of sets $Z_i$ and tensors $t_i$ and $x_{i_j}$ (which means that we treat tensors as tuples and want  $\mathbb{\widetilde{L}}$ to contain all elements of corresponding tuples).
    Let the field $\mathbb{L}$ be an algebraic closure of the field $\mathbb{\widetilde{L}}$. Since $\mathbb{K}$ is algebraically closed, $\mathbb{L}$ is a subfield of $\mathbb{K}$. The field $\mathbb{\widetilde{L}}$ is a countable field and thus $\mathbb{L}$ is also a countable field as an algebraic closure of a countable field, so $\mathbb{L}$ is a countable extension of $\mathbb{Q}$.
    By construction of $\mathbb{L}$ there exists an injection from $\mathbb{L}$ to $\mathbb{K}$. This injection is a bijection on tensors $t_i$, so it preserves the whole construction of obtaining large matrix multiplications using arbitrary restrictions.
    Since the field $\mathbb{L}$ is a countable extension of $\mathbb{Q}$, there exists an injection from $\mathbb{L}$ to $\mathbb{C}$.
    Let us denote by $t_i^\prime$ the preimages of $t_i$ in injection from $\mathbb{L}$ to $\mathbb{K}$ and by $\widetilde{t_i}$ the images of $t_i^\prime$ in injection from $\mathbb{L}$ to $\mathbb{C}$. Now we can induce a construction of large matrix multiplications in $\mathbb{C}$ using tensors $\widetilde{t_i}$ and arbitrary restrictions.
    
    We now want to obtain a contradiction with Proposition \ref{proposition-general-comb_for_c}. To do so, we need to prove that tensors $\widetilde{t_i}$ are unstable and $N(\widetilde{t_i}) = n$.
    By {\cite[Theorem 20]{blser_et_al:LIPIcs:2020:12686}} combinatorial instability implies instability, so it is sufficient to show that tensors $\widetilde{t_i}$ are unstable and $N(\widetilde{t_i}) = n$.
    First we show that $t_i^\prime$ are combinatorially unstable and $N(t_i^\prime) = n$.
    The latter statement is quite clear, since we have that $N(t_i) = n$, which is obtained by contractions with tensors $x_{i_j}$ (by definition) and because tensors $x_{i_j}$ can be injected to $\mathbb{L}$, we have also $N(t_i^\prime) = n$.
    Being combinatorially unstable is also quite clear - because we added to $\mathbb{L}$ tensors $t_i$, supports of these tensors over $\mathbb{K}$ and $\mathbb{L}$ are the same sets. Since being combinatorially unstable is a property of tensor support, and supports over both fields are equal, being combinatorially unstable over $\mathbb{K}$ implies being combinatorially unstable over $\mathbb{L}$.
    Because $\mathbb{L}$ can be injected to $\mathbb{C}$, we can inject $x_{i_j}$ to $\mathbb{C}$, so analogically $\widetilde{t_i}$ are combinatorially unstable. Also $t_i$ can be injected to $\mathbb{C}$, so the support of $\widetilde{t_i}$ is equal to supports of $t_i$ and $t_i^\prime$, so we also have $N(\widetilde{t_i}) = n$.
\end{proof}

\begin{remark}
\label{remark-characteristic}
It is well-known (a result due to Sch\"onhage) that $\omega$ depends only on the characteristic of the field, not on the field itself \cite[Corollary 15.18]{BurgisserBook}. However, in Proposition \ref{proposition-general-comb_for_c} and Proposition \ref{proposition-general-comb} we do not show that $\omega$ in characteristic zero has some specific value, but that it cannot be proved that $\omega = 2$ using arbitrary restrictions from the class of starting tensors. Thus, the generalisation from the case of $\mathbb{C}$ to the case of any field of characteristic zero needed justification.
\end{remark}

\printbibliography
\end{document}